\documentclass{article}

\usepackage{fullpage}
\usepackage{amsthm,amssymb,amsmath}  
\usepackage{xspace,enumerate}
\usepackage[hidelinks]{hyperref}
\usepackage[capitalise]{cleveref}
\usepackage[utf8]{inputenc}
\usepackage{thmtools}
\usepackage{thm-restate}
\usepackage{authblk}

  \theoremstyle{plain}
  \newtheorem{theorem}{Theorem}
  \newtheorem{lemma}{Lemma}  
  \newtheorem{corollary}[theorem]{Corollary}  
  \newtheorem{fact}{Fact}
  \newtheorem{observation}{Observation}
  \theoremstyle{definition}
  \newtheorem{definition}{Definition}
  
  \newtheorem{example}{Example}
  \newtheorem{remark}[definition]{Remark}
 
  \newtheorem*{claim}{Claim}
 
\title{Linear-Time Algorithm for Long LCF with $k$ Mismatches}
\author[1]{Panagiotis Charalampopoulos}
\author[1]{Maxime Crochemore}
\author[1]{Costas S. Iliopoulos}
\author[2]{Tomasz Kociumaka}
\author[1]{Solon P. Pissis}
\author[2]{Jakub Radoszewski}
\author[2]{Wojciech Rytter}
\author[2]{Tomasz Wale\'n}

\affil[1]{Department of Informatics, King's College London, London, UK\\
\texttt{[panagiotis.charalampopoulos,maxime.crochemore,}\\
\texttt{costas.iliopoulos,solon.pissis]@kcl.ac.uk}}
\affil[2]{Faculty~of Mathematics, Informatics and Mechanics, University of Warsaw, Warsaw, Poland\\
\texttt{[kociumaka,jrad,rytter,walen]@mimuw.edu.pl}}

\date{\vspace{-5ex}}


\usepackage{microtype}
  \usepackage{amsmath,amsfonts}
  \usepackage{tikz}
  \usepackage[cmbtt]{bold-extra}
  \usetikzlibrary{decorations.pathmorphing}
  \usetikzlibrary{decorations.markings}
  \usetikzlibrary{decorations.pathmorphing,shapes}
  \usetikzlibrary{calc,decorations.pathmorphing,shapes}
  \usetikzlibrary{snakes}
  \usepackage{tikz-qtree}
  \usepackage[T1]{fontenc}
  \usepackage[utf8]{inputenc}
  \usepackage{thm-restate}
  \usepackage{comment}
  \usepackage{forest}
  \usepackage{xspace}
  \usepackage{enumerate}
  \usepackage{makecell}
  \usepackage{listings}
  \usepackage{multirow}
  \usepackage{diagbox}
  \usepackage{todonotes}
  \usepackage{thmtools}
  \usepackage{thm-restate}
    \usepackage[ruled,noline,noend]{algorithm2e}
  \usepackage[capitalise]{cleveref}
  \usepackage{graphics,adjustbox}
  \usepackage{tabularx}
  \usepackage{amssymb}
  \usepackage{epsfig}
  \usepackage{verbatim}

\def\dd{\mathinner{.\,.}}
\newcommand{\Oh}{\mathcal{O}}
\newcommand{\kLCFFull}{\textsc{Longest Common Factor with at most $k$ Mismatches}\xspace}
\newcommand{\kLLCFFull}{\textsc{LCF of Length at Least $\ell$ with at most $k$ Mismatches}\xspace}
\newcommand{\kLCF}{\textsc{LCF}$_k$}
\newcommand{\oLCF}{\textsc{LCF}$_1$}

\newcommand{\Pairs}{\mathsf{Pairs}}
\newcommand{\Sh}{\Sigma_{\$}}

\newcommand{\F}{\mathcal{F}}

\newcommand{\HD}{d_H}
\renewcommand{\L}{\mathcal{L}}
\newcommand{\sub}{\subseteq}

\newcommand{\LCF}{\mathsf{LCF}}
\newcommand{\LCP}{\mathsf{LCP}}
\newcommand{\LCS}{\mathsf{LCS}}
\newcommand{\eps}{\varepsilon}

\newcommand{\floor}[1]{\lfloor #1 \rfloor}

\newcommand{\T}{\mathcal{T}}

 \newcommand{\defproblem}[3]{
  \vspace{2mm}
\noindent\fbox{
  \begin{minipage}{0.96\textwidth}
  #1\\
  {\bf{Input:}} #2  \\
  {\bf{Output:}} #3
  \end{minipage}
  }
  \vspace{2mm}
}

\renewcommand{\S}{\mathbf{S}}

\newcommand{\TT}{\mathcal{T}}

\newtheorem{prop}[theorem]{Proposition}

\begin{document}
  \maketitle

\begin{abstract}
In the Longest Common Factor with $k$ Mismatches (\kLCF) problem, we are given two strings $X$ and $Y$ of total length $n$,
and we are asked to find a pair of maximal-length factors, one of $X$ and the other of $Y$, such that their Hamming distance is at most $k$. 
Thankachan et al.~\cite{DBLP:journals/jcb/ThankachanAA16} show that this problem can be solved in $\Oh(n \log^k n)$ time and $\Oh(n)$ space for constant $k$. 
We consider the \kLCF($\ell$) problem in which we assume that the sought factors have length at least $\ell$, and the \kLCF($\ell$) problem for $\ell=\Omega(\log^{2k+2} n)$, which we call the Long \kLCF\ problem.
We use difference covers to reduce the Long \kLCF\ problem to a task involving $m=\Oh(n/\log^{k+1}n)$ \emph{synchronized} factors. 
The latter can be solved in $\Oh(m \log^{k+1}m)$ time, which results in a linear-time algorithm for Long \kLCF.
In general, our solution to \kLCF($\ell$) for arbitrary $\ell$ takes $\Oh(n + n \log^{k+1} n/\sqrt{\ell})$ time.
\end{abstract}

\section{Introduction}
The longest common factor (LCF) problem is a classical and well-studied problem in theoretical computer science.
It consists in finding a maximal-length factor of a string $X$ occurring in another string $Y$.
When $X$ and $Y$ are over a linearly-sortable alphabet, the LCF problem can be solved in the optimal $\Oh(n)$ time and space~\cite{DBLP:conf/cpm/Hui92,DBLP:books/cu/Gusfield1997}, where $n$ is the total length of $X$ and $Y$. 
Considerable efforts have thus been made on improving the {\em additional} working space; namely, the space required for computations, not taking into account the space providing read-only access to $X$ and $Y$. 
We refer the interested reader to~\cite{DBLP:conf/cpm/StarikovskayaV13,DBLP:conf/esa/KociumakaSV14}.

In many bioinformatics applications and elsewhere, it is relevant to consider potential alterations within the pair of input strings (e.g.~DNA sequences). It is thus natural to define the LCF problem under a distance metric model. 
The problem then consists in finding a pair of maximal-length factors of $X$ and $Y$ whose  distance is at most $k$. In fact, this problem has received much attention recently, in particular due to its applications in alignment-free sequence comparison~\cite{DBLP:journals/jcb/UlitskyBTC06,kmacs}.

Under the Hamming distance model, the problem is known as the \kLCFFull (\kLCF) problem.
The restricted case of $k=1$ was first considered in~\cite{DBLP:journals/poit/BabenkoS11}, where an $\Oh(n^2)$-time and $\Oh(n)$-space solution was given. It was later improved by Flouri et al.~\cite{DBLP:journals/ipl/FlouriGKU15}, who built heavily on a technique by Crochemore et al.~\cite{DBLP:journals/tcs/CrochemoreIMS06} to obtain $\Oh(n \log n)$ time and $\Oh(n)$ space.

For a general value of $k$, the problem can be solved in $\Oh(n^2)$ time and space by a dynamic programming algorithm, but more efficient solutions have been devised.
Leimeister and Morgenstern~\cite{kmacs} first suggested a greedy heuristic algorithm. Flouri et al.~\cite{DBLP:journals/ipl/FlouriGKU15} proposed an $\Oh(n^2)$-time algorithm that uses $\Oh(1)$ additional space. 
Grabowski~\cite{DBLP:journals/ipl/Grabowski15} presented two algorithms with running times $\Oh (n ((k+1) (\ell_0+1))^k)$ and $\Oh (n^2 k/\ell_k)$, where $\ell_0$ and $\ell_k$ are, respectively, the length of an LCF of $X$ and $Y$ and the length of an \kLCF{} of $X$ and $Y$. 
Thankachan et al.~\cite{DBLP:journals/jcb/ThankachanAA16} proposed an $\Oh(n \log^k n)$-time and $\Oh(n)$-space algorithm (for any constant $k$).

Abboud et al.~\cite{Abboud:2015:MAP:2722129.2722146} employed the polynomial method to obtain a $k^{1.5} n^2 / 2^{\Omega(\sqrt{\frac{\log n}{k}})}$-time randomized algorithm.
 Kociumaka et al.~\cite{DBLP:journals/corr/abs-1712-08573} showed that a strongly subquadratic-time algorithm for the \kLCF{} problem, for binary strings and $k=
\Omega(\log n)$, refutes the Strong Exponential Time Hypothesis~\cite{DBLP:journals/jcss/ImpagliazzoPZ01,DBLP:journals/jcss/ImpagliazzoP01}. Thus, subquadratic-time solutions for approximate variants of the problem have been developed \cite{DBLP:journals/corr/abs-1712-08573,DBLP:conf/cpm/Starikovskaya16}. The average-case complexity of this problem has also been considered~\cite{DBLP:journals/jcb/ThankachanCLAA16,DBLP:conf/sofsem/AlamroACIP18,PLCP}.

\subsection{Our Contribution}
We consider the following variant of the \kLCFFull problem in which the result is constrained to have at least a given length. Let \kLCF$(X,Y)$ denote the length of the longest common factor of $X$ and $Y$ with at most $k$ mismatches.

\defproblem{\kLLCFFull (\kLCF$(X,Y,\ell)$)}{Two strings $X$ and $Y$ of total length $n$ and integers $k\ge 0$ and $\ell\ge 1$}{\kLCF$(X,Y)$ if it is at least $\ell$, and ``NONE'' otherwise.}

We focus on a special case of this problem with $\ell=\Omega(\log^{2k+2} n)$ which we call \textsc{Long} \kLCF\ problem.

Apart from its theoretical interest, solutions to the \kLCF$(X,Y,\ell)$ problem, and \textsc{Long} \kLCF\ in particular, may prove to be useful from a practical standpoint. The \kLCF{} length has been used as a measure of sequence similarity~\cite{DBLP:journals/jcb/UlitskyBTC06,kmacs}. It is thus assumed that similar sequences share relatively long factors with $k$ mismatches.

We show an $\Oh(n)$-time algorithm for the \textsc{Long} \kLCF\ problem. 
Moreover, we prove that \kLCF$(X,Y,\ell)$ can be solved in $\Oh(n + n \log^{k+1} n/\sqrt{\ell})$ time for arbitrary $\ell$
and constant $k$.
In the final section we discuss the complexity for $k=\Oh(\log n)$.
This unveils that the $\Oh(\cdot)$ notation hides a multiplicative factor that is actually subconstant in $k$.

For simplicity, we only describe how to compute the length $\LCF_k(X,Y)$. 
It is straightforward to amend our solution so that it extracts the corresponding factors of $X$ and~$Y$.

\subparagraph*{\bf Toolbox.}
We use the following algorithmic tools:
\begin{itemize}
  \item Difference covers (see, e.g., \cite{DBLP:journals/tocs/Maekawa85,BurkhardtEtAl2003}) let us reduce the \kLCF$(X,Y,\ell)$ problem to searching for longest common prefixes and suffixes with at most $k$ errors ($\LCP_k$, $\LCS_k$) at positions belonging to sets $A$ in $X$ and $B$ in $Y$ such that $|A|,|B| = \Oh(n/\sqrt{\ell})$.
  \item We use a technique of recursive heavy-path decompositions by Cole et al.~\cite{DBLP:conf/stoc/ColeGL04}, already used in the context of the \kLCF\ problem by Thankachan et al.~\cite{DBLP:journals/jcb/ThankachanAA16}, to reduce computing $\LCP_k$, $\LCS_k$ to computing $\LCP$, $\LCS$ in sets of modified prefixes and suffixes starting at positions in $A$ and $B$. Modifications consist in at most $k$ changes and increase the size of the problem by a factor of $\Oh(\log^k n)$. We adjust the original technique of Cole et al.~\cite{DBLP:conf/stoc/ColeGL04} so that all modified strings are stored in one compacted trie. Details are given in the appendix. 
  \item Finally we apply to the compacted trie a solution to a problem on colored trees that is the cornerstone of the previous $\Oh(n \log n)$-time solution for \oLCF\ problem by Flouri et al.~\cite{DBLP:journals/ipl/FlouriGKU15} (and originates from efficient merging of AVL trees~\cite{DBLP:journals/jacm/BrownT79}). 
\end{itemize}
In total we arrive at $\Oh(n\log^{k+1} n / \sqrt{\ell}+n)$ complexity.

\section{Preliminaries}
Henceforth we denote the input strings by $X$ and $Y$ and their common length by~$n$.

The $i$-th letter of a string $U$, for $1 \le i \le |U|$, is denoted by $U[i]$.
By $[i \dd j]$ we denote the integer interval $\{i,\ldots,j\}$ and by $U[i \dd j]$ we denote the string $U[i] \ldots U[j]$ that we call a factor of $U$.
For simplicity, we denote $U[\dd i]=U[1 \dd i]$ and $U[i \dd]=U[i \dd |U|]$.
By $U^R$ we denote the mirror image of $U$.

For a pair of strings $U$ and $V$ such that $|U|=|V|$, we define their Hamming distance as $\HD(U,V)=|\{1 \le i \le |U|\,:\, U[i] \ne V[i]\}|$.
For two strings $U,V$ and a non-negative integer $d$,
we define
$$\LCP_d(U,V)\ =\ \max\{p\le |U|,|V|\,:\,\HD(U[1 \dd p],V[1 \dd p])\le d\}.$$

Let $T$ be the trie of a collection of strings $\F$. 
The \emph{compacted trie} of $\F$, $\TT(\F)$, contains the root, the branching nodes, and the terminal nodes of $T$. 
 Each edge of the compacted trie may represent several edges of $\TT$ and is labeled by a factor of one of the strings $F_i$, stored in $\Oh(1)$ space. 
 The edges outgoing from a node are labeled by the first letter of the respective strings. 
 The size of a compacted trie is $\Oh(m)$. The best-known example of a compacted trie is the suffix tree of a string; see \cite{AlgorithmsOnStrings}.

\subsection{Difference covers}

We say that a set $\S(d)\sub \mathbb{Z}_+$  is a $d$-\emph{cover} if
there is a constant-time computable function $h$ such that for $i,j\in \mathbb{Z}_+$ we have $0\le h(i,j)< d$ and $i+h(i,j), j+h(i,j)\in \S(d)$ (see Figure~\ref{fig:diff_cover_example}).
The following fact synthesizes a well-known construction implicitly used in~\cite{BurkhardtEtAl2003}, for example.
\begin{fact}[\cite{DBLP:journals/tocs/Maekawa85,BurkhardtEtAl2003}]
For each $d\in \mathbb{Z}_+$ there is a $d$-cover $\S(d)$
such that $\S_n(d):= \S(d)\cap [1\dd n]$ is of size $\Oh(\frac{n}{\sqrt{d}})$ and can be constructed in $\Oh(\frac{n}{\sqrt{d}})$ time.
\end{fact}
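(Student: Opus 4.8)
\medskip
\noindent\textbf{Proof sketch.}
The plan is to use the classical reduction from $d$-covers to \emph{difference covers of $\mathbb{Z}_d$}. Call $D\sub\{0,1,\ldots,d-1\}$ a difference cover of $\mathbb{Z}_d$ if every residue $a$ can be written as $a\equiv x-y\pmod d$ with $x,y\in D$, and assume, as the standard explicit constructions do, that a closed-form formula returns such a pair $(x,y)$ from $a$ in constant time. Given such a $D$ with $|D|=\Oh(\sqrt d)$, I would take the $d$-periodic set
$$\S(d)\ :=\ \{\,z\in\mathbb{Z}_+ \;:\; (z\bmod d)\in D\,\},$$
whose restriction to any length-$d$ block of $\mathbb{Z}_+$ is a copy of $D$ (with $0$ identified with $d$).

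Next I would verify the $d$-cover property. Given $i,j\in\mathbb{Z}_+$, let $a=(j-i)\bmod d$, fetch $x,y\in D$ with $x-y\equiv a\pmod d$, and set $h(i,j):=(y-i)\bmod d$. Then $0\le h(i,j)<d$ as required, and, reducing modulo $d$, $i+h\equiv y\in D$ and $j+h\equiv (j-i)+y\equiv a+y\equiv x\in D$, so both $i+h(i,j)$ and $j+h(i,j)$ lie in $\S(d)$; computing $h$ takes a constant number of arithmetic operations. For the size, in the relevant range $d\le n$ the interval $[1\dd n]$ meets $\lceil n/d\rceil$ length-$d$ blocks, each carrying at most $|D|=\Oh(\sqrt d)$ elements of $\S(d)$, so $|\S_n(d)|\le|D|\lceil n/d\rceil=\Oh\!\big(\sqrt d\,(n/d+1)\big)=\Oh(n/\sqrt d)$; enumerating these blocks outputs $\S_n(d)$ within the same time bound, after $\Oh(\sqrt d)$ time spent building $D$.

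The substance of the argument — and the step that calls for the most care — is the construction of $D$. For $d=r^2$ it is transparent: take $D=\{0,1,\ldots,r-1\}\cup\{0,r,2r,\ldots,(r-1)r\}$, of size at most $2r-1=\Oh(\sqrt d)$. Given $a\in\{0,\ldots,d-1\}$, return $(0,0)$ if $a=0$; return $\big(br,\,br-a\big)$ with $b=\lceil a/r\rceil$ if $1\le a\le r^2-r$ (then $b\le r-1$ and $br-a\in\{0,\ldots,r-1\}$, and $br-(br-a)=a$); and return $\big(0,\,r^2-a\big)$ if $r^2-r<a<r^2$ (then $r^2-a\in\{1,\ldots,r-1\}$ and $-(r^2-a)\equiv a\pmod{r^2}$). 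For general $d$ I would invoke the classical fact that $\mathbb{Z}_d$ has a difference cover of size $\Oh(\sqrt d)$, computable in $\Oh(\sqrt d)$ time and equipped with a constant-time closed-form pair-retrieval (see~\cite{DBLP:journals/tocs/Maekawa85,BurkhardtEtAl2003}). The two points one must watch are that the period is taken to be \emph{exactly} $d$ (so that $h<d$, not merely $h<\lceil\sqrt d\rceil^2$) and that $D$ is $\Oh(\sqrt d)$-spaced within a period (which is what makes the $\Oh(n/\sqrt d)$ count go through); both hold for the standard constructions. Assembling these pieces yields the claimed $d$-cover.
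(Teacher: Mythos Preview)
Your proof sketch is correct and follows the standard difference-cover construction that the cited references use. Note, however, that the paper does not actually supply its own proof of this Fact: it is stated as a result imported from~\cite{DBLP:journals/tocs/Maekawa85,BurkhardtEtAl2003}, with the remark that it ``synthesizes a well-known construction implicitly used in~\cite{BurkhardtEtAl2003}.'' So there is nothing to compare against---you have written out the argument that the paper elects to cite. Your handling of the $d=r^2$ case is clean, and you are right to flag that for general $d$ one must ensure the period is exactly $d$ (so that $h<d$) and that pair-retrieval is constant-time without a size-$d$ lookup table; the explicit constructions in the cited works do provide this.
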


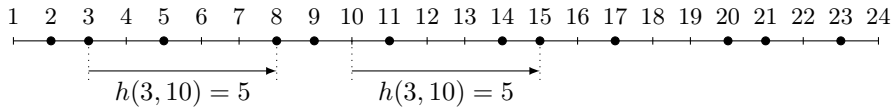
\begin{figure}[ht]

\begin{center}
\begin{tikzpicture}[scale=0.5]
\tikzstyle{s} = [draw, circle, fill=black, minimum size = 3pt, inner sep = 0 pt, color=black]
\draw (1,0)--(24,0);
 
\foreach \i in {1,...,24} {
  \draw (\i,-0.1)--+(0,0.2);
  \node at (\i, 0.2) [above] {\small \i};
}
\foreach \i in {2,3,5,8,9,11,14,15,17,20,21,23} {
  \node [s] at (\i,0) {};
}

\foreach \i/\l in {3/5, 10/5} {
  \draw [dotted] (\i,0)--+(0,-1.1);
  \draw [dotted] (\i+\l,0)--+(0,-1.1);
  \draw [-latex] (\i,-.8)--node[midway,below] {$h(3,10)=5$} +(\l,0);
}
\end{tikzpicture}
\end{center}

\caption{An example of a 6-cover $\S_{24}(6)=\{2,3,5, 8,9,11, 14,15,19, 20,21,23\}$, with the
elements marked as black circles. For example, we may have $h(3,10)=5$ since $3+5,\,10+5\in \S_{24}(6)$. }
\label{fig:diff_cover_example}
\end{figure}

\subsection{Colored Trees Problem}
As a component of our solution we use the following problem for colored trees:

\defproblem{\textsc{Colored Trees Problem}}{Two trees $T_1$ and $T_2$ containing blue and red leaves such that each internal node is branching (except for, possibly, the root). Each leaf has a number between 1 and $m$. Each tree has at most one read leaf and at most one blue leaf with a given number. The nodes of $T_1$ and $T_2$ are weighted such that children are at least as heavy as their parent.}{A node $v_1$ of $T_1$ and a node $v_2$ of $T_2$ with maximum total weight such that $v_1$ and $v_2$ have at least one blue leaf of the same number and at least one red leaf of the same number in their subtrees.}

This abstract problem lies at the heart of the algorithm of Flouri et al.~\cite{DBLP:journals/ipl/FlouriGKU15}
for the \textsc{Longest Common Factor with 1 Mismatch} problem. They solve it in
$\Oh(m\log m)$ time applying a solution inspired by an algorithm of Crochemore et al.~\cite{DBLP:journals/tcs/CrochemoreIMS06}
finding the longest repeat with a block of $k$ don't cares,
which, in turn, is based on the fact that two AVL trees can be merged efficiently~\cite{DBLP:journals/jacm/BrownT79}.

\begin{fact}[\cite{DBLP:journals/tcs/CrochemoreIMS06,DBLP:journals/ipl/FlouriGKU15}]\label{fct:colored}
  \textsc{Colored Trees Problem} can be solved in $\Oh(m \log m)$ time.
\end{fact}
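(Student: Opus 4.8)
The plan is to reduce the problem, for a \emph{fixed} $v_1\in T_1$, to a clean characterisation of the best partner $v_2\in T_2$, and then to sweep $T_1$ bottom-up while maintaining exactly the data needed to evaluate that characterisation, merging balanced search trees à la Brown--Tarjan. First I would fix a DFS of $T_2$, identify each leaf of $T_2$ with its rank in that DFS, and preprocess $T_2$ for $\Oh(1)$-time lowest common ancestor ($\mathrm{lca}_2$) queries. For $v_1\in T_1$, call a pair $(\mathrm{col},c)$ with $\mathrm{col}\in\{\text{blue},\text{red}\}$ a \emph{marker} of $v_1$ if a leaf of colour $\mathrm{col}$ and number $c$ occurs in the subtree of $v_1$ and (by assumption, uniquely) also in $T_2$; I identify that marker with the corresponding $T_2$-leaf, so the markers of $v_1$ form a set of $T_2$-leaves carrying colours. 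Since children are at least as heavy as their parents, the heaviest $v_2$ valid together with $v_1$ is a \emph{deepest} node whose subtree contains a blue and a red marker of $v_1$.

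The key combinatorial claim I would prove is that such a deepest node is always $\mathrm{lca}_2(\beta,\rho)$ for two markers $\beta,\rho$ of $v_1$ of opposite colour that are \emph{consecutive} in DFS order among the markers of $v_1$: the subtree of $v_2$ spans a contiguous interval of the DFS order, so any marker of $v_1$ lying between $\beta$ and $\rho$ also lies in that subtree, and replacing $\beta$ or $\rho$ by such a marker of the opposite colour yields a strictly deeper (hence at least as heavy) valid node, contradicting minimality; conversely $\mathrm{lca}_2$ of any opposite-colour pair of markers of $v_1$ is valid with $v_1$. Consequently
\[
\mathrm{OPT}=\max_{v_1\in T_1}\Bigl(w_1(v_1)+\max\{\,w_2(\mathrm{lca}_2(\beta,\rho))\,:\,\beta,\rho\text{ consecutive opposite-colour markers of }v_1\,\}\Bigr),
\]
where the inner maximum is $-\infty$ if $v_1$ has no such pair. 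Correctness of any algorithm evaluating this formula is then immediate: for the optimal pair $(v_1^\star,v_2^\star)$ the node $v_2^\star$ is valid with $v_1^\star$, hence captured (or beaten by a heavier valid node) when we process $v_1^\star$, and every node ever considered is genuinely valid.

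To evaluate the formula I would run a post-order traversal of $T_1$ (which has $\Oh(m)$ nodes), maintaining at the current node $v_1$ a balanced binary search tree $\mathcal M(v_1)$ holding the markers of $v_1$ keyed by $T_2$-DFS rank, augmented so that every tree node stores the leftmost and rightmost marker of its subtree and the maximum of $w_2(\mathrm{lca}_2(\cdot,\cdot))$ over consecutive opposite-colour marker pairs inside its subtree. A seam between a left subtree, a stored key and a right subtree introduces only $\Oh(1)$ new consecutive pairs, so this augmentation is recomputable in $\Oh(1)$ time per affected tree node and survives rotations, splits and joins. At a leaf of $T_1$, $\mathcal M$ is empty or a singleton; at an internal node $v_1$, $\mathcal M(v_1)$ is the merge of the $\mathcal M$ of its children, and all keys are distinct because a number occurs at most once per colour in $T_1$ and hence in at most one child subtree. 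After forming $\mathcal M(v_1)$, its root yields the inner maximum and we update a running answer with $w_1(v_1)$ plus that value.

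Finally, for the running time I would merge children two at a time, always the smaller tree into the larger, using the Brown--Tarjan AVL-merge~\cite{DBLP:journals/jacm/BrownT79}, which merges balanced trees of sizes $p\le q$ in $\Oh(p\log(q/p+1))$ time; the augmented values add only $\Oh(1)$ per structural change and do not affect this bound. The step I expect to be the real obstacle is proving that the merging cost summed over the whole traversal is $\Oh(m\log m)$, rather than the $\Oh(m\log^2 m)$ that follows from the crude "each element is on the smaller side $\Oh(\log m)$ times" argument: this requires the amortised analysis of Brown and Tarjan, namely that the sum of $p\log(q/p+1)$ over any sequence of binary merges building up to total size $m$ is $\Oh(m\log m)$ — the same estimate that underlies the longest-repeat-with-don't-cares algorithm of Crochemore et al.~\cite{DBLP:journals/tcs/CrochemoreIMS06} and the $\Oh(m\log m)$-time \oLCF{} algorithm of Flouri et al.~\cite{DBLP:journals/ipl/FlouriGKU15}. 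Everything else (LCA preprocessing, the $T_1$ traversal, answer updates) costs $\Oh(m)$, giving the claimed $\Oh(m\log m)$ bound. As an alternative to small-to-large merging on $T_1$ one could use a heavy-path decomposition of $T_1$, processing each heavy path together with the search trees arriving from the off-path subtrees hanging from it; I would keep the merge-based presentation as it is the most direct.
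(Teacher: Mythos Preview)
The paper does not give its own proof of this statement; it is stated as a cited Fact, with only the remark that Flouri et al.\ solve it via the technique of Crochemore et al., which in turn rests on Brown--Tarjan AVL-tree merging. Your proposal is a correct and detailed reconstruction of precisely this approach; the only phrasing to tighten is ``strictly deeper \ldots\ contradicting minimality'' in the key combinatorial claim (the replacement $\mathrm{lca}$ may coincide with $v_2$, so argue instead that the maximum of $w_2(\mathrm{lca}_2(\beta,\rho))$ over \emph{all} opposite-colour marker pairs equals the maximum over \emph{consecutive} ones via the exchange you sketch), but the underlying reduction and the $\Oh(m\log m)$ total merge-cost analysis are sound.
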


\begin{figure}[h!]
\centering
\begin{tikzpicture}
\tikzset{
  sibling distance=0.6em,
  level distance=2em,
  edge from parent/.style={draw, edge from parent path={(\tikzparentnode) -- (\tikzchildnode)}},
  inner/.style = {circle, fill=black, inner sep=0pt, minimum size=2pt, draw},
  leafR/.style = {shape=rectangle, draw, align=center, color=red!80!black},
  leafB/.style = {shape=circle, draw, align=center, color=blue!80!black, inner sep=2pt}
}
\matrix{
\tikzset{frontier/.style={distance from root=7em}}
\Tree [
  . \node[inner] {} node[above, xshift=-3em, yshift=1em] {$T_1$};
  [
    .\node[inner] {};
    [ 
      .\node[inner] {};
      [ \node[leafB] {$1$}; ]
      [ \node[leafB] {$2$}; ]
    ]
    [ 
      .\node[inner] {};
      [ \node[leafR] {$1$}; ]
      [ \node[leafB] {$3$}; ]
    ]
  ]
  [
    .\node[inner] {}  node[above, right, yshift=0.2em] {$v_1$};
    [ 
      .\node[inner] {};
      [ \node[leafR] {$2$}; ]
      [ \node[leafR] {$3$}; ]
    ]
    [ 
      .\node[inner] {};
      [ \node[leafB] {$4$}; ]
      [ \node[leafR] {$4$}; ]
    ]
  ]
]
&
\draw[white] (0,0)--+(2em,0);
&
\tikzset{frontier/.style={distance from root=7em}}
\Tree [
  . \node[inner] {} node[above, xshift=-3em, yshift=1em] {$T_1$};
  [
    .\node[inner] {};
    [ 
      .\node[inner] {};
      [ \node[leafB] {$1$}; ]
      [ \node[leafB] {$3$}; ]
    ]
    [ 
      .\node[inner] {} node[above, right, yshift=0.2em] {$v_2$};
      [ \node[leafB] {$4$}; ]
      [ \node[leafR] {$2$}; ]
    ]
  ]
  [
    .\node[inner] {};
    [ 
      .\node[inner] {};
      [ \node[leafR] {$1$}; ]
      [ \node[leafR] {$3$}; ]
    ]
    [ 
      .\node[inner] {};
      [ \node[leafR] {$4$}; ]
      [ \node[leafB] {$2$}; ]
    ]
  ]
]
\\
};
\end{tikzpicture}
\caption{Example instance for \textsc{Colored Trees Problem}. Assuming that each node has weight equal to the distance from the root, the optimal solution is a pair of nodes $(v_1,v_2)$ as shown in the figure. Both $v_1$ and $v_2$ have as a descendant a blue leaf with number $4$ and a red leaf with number $2$.
}
\label{fig:tree_problem_ex1}
\end{figure}
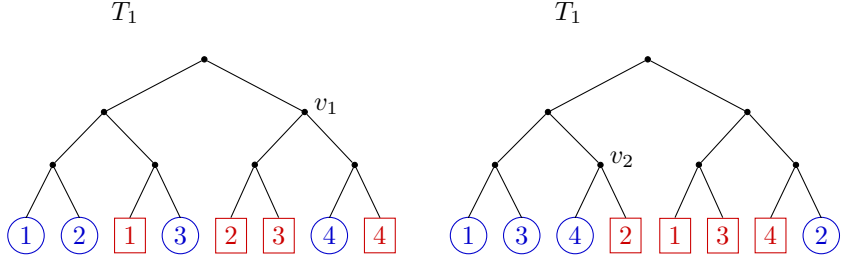

In our solution we actually use the following problem related to families of strings represented on a compacted trie. It reduces to the \textsc{Colored Trees Problem}.

\renewcommand{\problem}{\textsc{Two String Families LCP Problem}\xspace}
\renewcommand{\P}{\mathcal{P}}
\newcommand{\Q}{\mathcal{Q}}

\newcommand{\maxPairLCP}{\mathrm{maxPairLCP}}

\defproblem{\problem}{
A compacted trie $\T(\F)$ of a family of strings $\F$
and two sets $\P,\Q\sub \F^2$}
 {The value $\maxPairLCP(\P,\Q)$, defined as\\
{$\maxPairLCP(\P,\Q)\!=\!\max\{\LCP(P_1,Q_1)+\LCP(P_2,Q_2) : (P_1,P_2)\in \P \text{ and }(Q_1,Q_2)\in \Q\}$}%
 }

\begin{lemma}\label{lem:problem}
The \problem can be solved in $\Oh(|\F|+N\log N)$ time, where $N=|\P|+|\Q|$.
\end{lemma}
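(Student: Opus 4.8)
The plan is to reduce the \problem to the \textsc{Colored Trees Problem} of \cref{fct:colored} and then invoke that black box. The key conceptual point is that an $\LCP$ of two strings of $\F$ corresponds to the (string-depth-weighted) lowest common ancestor in the compacted trie $\T(\F)$, so a term $\LCP(P_1,Q_1)+\LCP(P_2,Q_2)$ is a sum of two node depths, one in each of two copies of the trie. This is exactly the structure the \textsc{Colored Trees Problem} optimizes, provided we encode membership in $\P$ and $\Q$ by a coloring scheme.

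Concretely, I would first build, if not given, the compacted trie $\T(\F)$; since its size is $\Oh(|\F|)$ and each leaf corresponds to a string of $\F$, this step costs $\Oh(|\F|)$. I would take $T_1$ and $T_2$ to be two copies of $\T(\F)$ (with each node's weight equal to its string depth, so that the monotonicity ``children at least as heavy as parents'' holds). Now I would enumerate the pairs in $\P$ and $\Q$, assigning to the $i$-th pair a distinct number in $[1\dd N]$. For a pair $(P_1,P_2)\in\P$ with number $i$, I would mark with color \emph{blue} and label $i$ the leaf of $T_1$ corresponding to $P_1$ and the leaf of $T_2$ corresponding to $P_2$; for a pair $(Q_1,Q_2)\in\Q$ with number $j$, I would mark with color \emph{red} and label $j$ the leaf of $T_1$ for $Q_1$ and the leaf of $T_2$ for $Q_2$. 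Then a node $v_1\in T_1$ and a node $v_2\in T_2$ have a common blue leaf number $i$ in their subtrees iff both $P_1$ and $P_2$ of the $i$-th $\P$-pair descend from $v_1$ and $v_2$ respectively, i.e.\ $v_1$ is an ancestor of the leaf for $P_1$ and $v_2$ of the leaf for $P_2$; taking $v_1,v_2$ as deep as possible subject to also sharing a common red number $j$ yields $\mathrm{depth}(v_1)+\mathrm{depth}(v_2) = \LCP(P_1,Q_1)+\LCP(P_2,Q_2)$ for the optimal choice of $i$ and $j$. Hence the optimum of the \textsc{Colored Trees Problem} instance equals $\maxPairLCP(\P,\Q)$, and by \cref{fct:colored} it is computed in $\Oh(M\log M)$ time where $M$ is the number of leaves with labels, which is $\Oh(N)$.

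There is one subtlety to handle: the \textsc{Colored Trees Problem} requires that each tree has \emph{at most one} red and \emph{at most one} blue leaf with any given number, and that every internal node be branching. The uniqueness condition is automatic from our numbering, since each number $i$ is used for exactly one $\P$-pair (contributing one blue leaf to each tree) and similarly for red. The branching condition may fail because a string of $\F$ may be a proper prefix of another, so its trie node is a non-branching internal node rather than a leaf; the standard fix is to append a unique sentinel, or equivalently to attach a pendant leaf, at each such node, which changes the trie size by only a constant factor and does not affect depths. A further minor point is that two distinct pairs in $\P$ might induce the same leaf of $T_1$ or $T_2$ (if $P_1=P_1'$); this is fine, we simply place several numbered blue leaves as children of that node (or, again, hang them off via sentinels), keeping the total leaf count $\Oh(N)$.

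The main obstacle, and the place requiring the most care, is verifying the correspondence in both directions — that every feasible pair $(v_1,v_2)$ in the colored-trees instance is realized by some $((P_1,P_2),(Q_1,Q_2))\in\P\times\Q$ with $\mathrm{depth}(v_1)+\mathrm{depth}(v_2)\le\LCP(P_1,Q_1)+\LCP(P_2,Q_2)$, and conversely that every such string-pair is witnessed by a feasible node-pair at depth exactly equal to the $\LCP$ sum. The forward direction uses that if $v_1$ is a common ancestor of the leaves for $P_1$ and $Q_1$ then $\mathrm{depth}(v_1)\le\LCP(P_1,Q_1)$; the backward direction uses that the lowest common ancestor of those two leaves is a node of depth exactly $\LCP(P_1,Q_1)$ lying on both root-to-leaf paths, so setting $v_1$ and $v_2$ to the respective LCAs gives a feasible pair attaining the sum. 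Putting these together, the total running time is $\Oh(|\F|)$ for trie construction and label placement plus $\Oh(N\log N)$ for the colored-trees call, i.e.\ $\Oh(|\F|+N\log N)$ as claimed.
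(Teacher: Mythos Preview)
Your proposal is correct and follows essentially the same approach as the paper: reduce to the \textsc{Colored Trees Problem} by taking two copies of $\T(\F)$ with string-depth weights, attaching a blue leaf with a fresh label to each coordinate of a $\P$-pair and a red leaf for each $\Q$-pair, and using the $\LCP$-equals-LCA-depth correspondence. The only cosmetic difference is that the paper, after attaching colored leaves, explicitly \emph{prunes} each copy (removing nodes with no colored descendant and dissolving single-child nodes) so that the resulting trees have exactly $|\P|+|\Q|$ leaves and all internal nodes branch; you instead patch the branching requirement via sentinels and argue directly that the number of labeled leaves is $\Oh(N)$. Both routes are fine, but note that without pruning there may be residual unlabeled leaves of $\T(\F)$, so the pruning step (or an equivalent removal of unused terminals) is what actually enforces the input format of \cref{fct:colored}.
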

\begin{proof}
First, we create two copies $\T_1$ and $\T_2$ of the tree $\T(\F)$,
removing the edge labels but preserving the node weights $w(v)$ equal to the sum of lengths of edges on the path to the root.

Next, for each $(P_1,P_2)\in \P$ we attach a blue leaf to the terminal node of $\T_1$ representing $P_1$ and to the terminal of $\T_2$ representing $P_2$.
We label these two blue leaves with a unique label, denoted here $L_\P(P_1,P_2)$.
Similarly, for each $(Q_1,Q_2)\in \Q$, we attach red leaves to the terminal node of $\T_1$ representing $Q_1$
and the terminal node of $\T_2$ representing $Q_2$.
We label these two red leaves with a unique label $L_\Q(Q_1,Q_2)$.
Finally, in both $\T_1$ and $\T_2$ we remove all nodes which do not contain any colored leaf in their subtrees
and dissolve all nodes with exactly one child
(except for the roots).
This way, each tree $\T_i$ contains $\Oh(|\P|+|\Q|)$ nodes, including $|\P|+|\Q|$ leaves, each with a distinct label.

Observe that for $(P_1,P_2)\in \P$, $(Q_1,Q_2)\in \Q$, and $j\in \{1,2\}$,
the value $\LCP(P_j,Q_j)$ is the weight of the lowest common ancestor in $\T_j$
of the two leaves with labels $L_\P(P_1,P_2)$ and $L_\Q(Q_1,Q_2)$.
Consequently, our task can be formulated as follows:
Find a pair of internal nodes $v_1\in \T_1$ and $v_2\in \T_2$
of maximal total weight $w(v_1)+w(v_2)$ so that the subtrees rooted
at $v_1$ and $v_2$ contain blue leaves with the same label and red leaves with the same label.
This is exactly the \textsc{Colored Trees Problem} that can be solved in $\Oh(m\log m)$ time, where $m=|\T_1|+|\T_2|=\Oh(|\P|+|\Q|)$ (Fact~\ref{fct:colored}).
\end{proof}

\section{Reduction of \kLCF($\ell$) problem to multiple synchronized $\LCP_k$'s}
\newcommand{\LF}[1]{\overrightarrow{#1}}
\newcommand{\RF}[1]{\overleftarrow{#1}}
\newcommand{\SIM}[1]{\stackrel{#1}{\rightarrow}}
\newcommand{\pos}{\mathit{pos}}

Let $U$ be a string of length $n$.
We denote:
  $$\Pairs_\ell(U) = \{((U[\dd i-1])^R, U[i \dd])\,:\, i \in \S_n(\ell)\}.$$
Observe that 
$|\Pairs_\ell(U)|=|\S_n(\ell)|=\Oh(n/\sqrt{\ell})$.

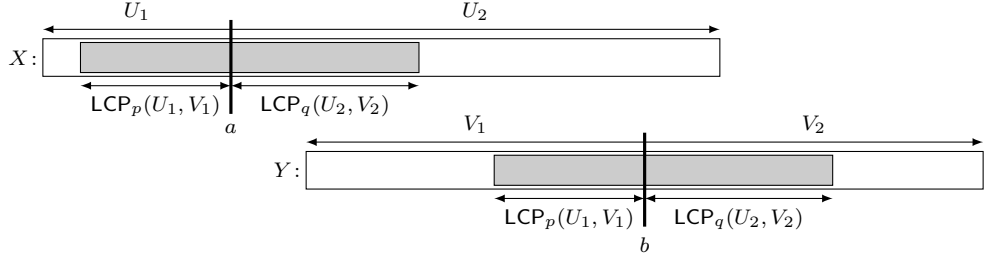
\begin{figure}[htpb]
\begin{center}
\begin{tikzpicture}[scale=0.5,every node/.style={font={\footnotesize}}]
\begin{scope}
\draw (0,0) rectangle (18,1);
\draw[fill=black!20] (1,0.1) rectangle (10, 0.9);
\draw (0,.5) node[left]{$X\!:\!$};
\draw[very thick] (5,1.5) -- (5,-1) node[below]{$a$};
\draw[-latex] (5,1.25) -- node[above]{$U_1$} (0,1.25);
\draw[-latex] (5,1.25) -- node[above]{$U_2$} (18,1.25);
\draw[latex-latex] (5,-.25) --node[below]{$\LCP_p(U_1,V_1)$} (1,-.25);
\draw[latex-latex] (5,-.25) --node[below]{$\LCP_q(U_2,V_2)$} (10,-.25);
\end{scope}

\begin{scope}[yshift=-3cm,xshift=7cm]
\draw (0,0) rectangle (18,1);
\draw[fill=black!20] (5,0.1) rectangle (14, 0.9);
\draw (0,.5) node[left]{$Y\!:\!$};
\draw[very thick] (9,1.5) -- (9,-1) node[below]{$b$};
\draw[-latex] (9,1.25) -- node[above]{$V_1$} (0,1.25);
\draw[-latex] (9,1.25) -- node[above]{$V_2$} (18,1.25);
\draw[latex-latex] (9,-.25) --node[below]{$\LCP_p(U_1,V_1)$} (5,-.25);
\draw[latex-latex] (9,-.25) --node[below]{$\LCP_q(U_2,V_2)$} (14,-.25);
\end{scope}
\end{tikzpicture}
\vspace{-.5cm}
\end{center}
\caption{If $\LCF_k(X,Y)\ge \ell$, then there exist $(U_1,U_2)\in \Pairs_\ell(X)$ and $(V_1,V_2)\in \Pairs_\ell(Y)$
such that $\LCF_k(X,Y)=\LCP_{p}(U_1,U_2)+\LCP_{q}(V_1,V_2)$ for some $p+q=k$.}\label{fig:red}
\end{figure}

\begin{lemma}\label{lem:red}
If $\LCF_k(X,Y)\ge \ell$, then
\begin{align*}
&\LCF_k(X,Y)=\\ &\quad\max_{p+q=k}\;\{\LCP_p(U_1,V_1)+\LCP_q(U_2,V_2)\;:\;
(U_1,U_2)\in \Pairs_\ell(X),\, (V_1,V_2)\in \Pairs_\ell(Y)\}.
\end{align*}
\end{lemma}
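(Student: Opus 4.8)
The plan is to prove the two inequalities separately. The ``$\ge$'' direction is the easy one: for any choice of $p+q=k$ and any pairs $(U_1,U_2)\in\Pairs_\ell(X)$, $(V_1,V_2)\in\Pairs_\ell(Y)$, the value $\LCP_p(U_1,V_1)+\LCP_q(U_2,V_2)$ is realized by an actual common factor with at most $k$ mismatches: if $(U_1,U_2)$ comes from position $i\in\S_n(\ell)$ in $X$ and $(V_1,V_2)$ from position $j\in\S_n(\ell)$ in $Y$, then gluing the length-$\LCP_p(U_1,V_1)$ reversed prefix ending at $i-1$ with the length-$\LCP_q(U_2,V_2)$ suffix starting at $i$ gives a factor of $X$ of length $\LCP_p(U_1,V_1)+\LCP_q(U_2,V_2)$ that matches the corresponding factor of $Y$ aligned at $j$ with at most $p+q=k$ mismatches (the mismatches to the left of the split contribute at most $p$, those to the right at most $q$, by definition of $\LCP_p$ and $\LCP_q$). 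Hence the right-hand side is at most $\LCF_k(X,Y)$.

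For the ``$\le$'' direction, I would start from an optimal alignment witnessing $\LCF_k(X,Y)=L\ge\ell$: a factor $X[a'\dd a'']$ and a factor $Y[b'\dd b'']$ of common length $L$ with $\HD(X[a'\dd a''],Y[b'\dd b''])\le k$, where the letters are matched by the shift $b'-a'$. Because $L\ge\ell$, the aligned region spans at least $\ell$ positions, so I can apply the difference-cover function $h$ to the starting positions: set $a=a'+h(a',b')$ and $b=b'+h(a',b')$. Then $0\le h(a',b')<\ell\le L$, so the split position $a$ lies strictly inside $X[a'\dd a'']$ and correspondingly $b$ inside $Y[b'\dd b'']$, and crucially $a\in\S_n(\ell)$ and $b\in\S_n(\ell)$. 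Thus $(U_1,U_2):=((X[\dd a-1])^R,X[a\dd])\in\Pairs_\ell(X)$ and $(V_1,V_2):=((Y[\dd b-1])^R,Y[b\dd])\in\Pairs_\ell(Y)$ are legitimate pairs in the maximization.

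It then remains to check that these pairs recover $L$. Split the optimal alignment at the synchronized position: the part of $X[a'\dd a'']$ lying strictly before index $a$, read right-to-left, is a prefix of $U_1=(X[\dd a-1])^R$, and the corresponding part of $Y[b'\dd b'']$, read right-to-left, is a prefix of $V_1$; these two prefixes have length $a-a'$ and some number $p$ of mismatches. Symmetrically, the part from index $a$ onward is a prefix of $U_2=X[a\dd]$ of length $a''-a+1$ matched against a prefix of $V_2=Y[b\dd]$ with some number $q$ of mismatches, and $p+q\le k$. Therefore $\LCP_p(U_1,V_1)\ge a-a'$ and $\LCP_q(U_2,V_2)\ge a''-a+1$, so $\LCP_p(U_1,V_1)+\LCP_q(U_2,V_2)\ge (a-a')+(a''-a+1)=a''-a'+1=L$. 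Replacing $p,q$ by $k-q$ padded appropriately (or simply noting $\LCP_{k-q}\ge\LCP_p$ when $p\le k-q$, and taking the max over $p+q=k$) shows the right-hand side is at least $L=\LCF_k(X,Y)$, completing the proof.

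The main obstacle is purely bookkeeping: one must be careful that the chosen split index $a$ actually falls inside the optimal factor (guaranteed by $h(a',b')<\ell\le L$, which is exactly where the hypothesis $\LCF_k(X,Y)\ge\ell$ is used) and that reversing prefixes correctly turns ``longest common suffix ending at $a-1$'' into ``longest common prefix of the reversed strings'' — the definition of $\Pairs_\ell$ with the mirror operation $(\cdot)^R$ is set up precisely so that this works, and the Hamming distance is invariant under reversal, so the mismatch count splits additively across the two sides with total at most $k$.
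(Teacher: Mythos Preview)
Your proposal is correct and follows essentially the same approach as the paper: one direction shows that every term on the right-hand side is witnessed by an actual pair of factors at Hamming distance at most $k$, and the other direction takes an optimal alignment, uses the difference-cover shift $h$ on the starting positions to find synchronized positions $a,b$ in $\S_n(\ell)$ lying within the aligned factors (this is where $\LCF_k(X,Y)\ge\ell$ is used), and splits the mismatch count across the two sides. Your handling of the constraint $p+q=k$ (padding the smaller side via monotonicity of $\LCP_d$ in $d$) matches the paper's choice of setting $q=k-p$.
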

\begin{proof}
First, assume that $(U_1,U_2)\in \Pairs_\ell(X)$, $(V_1,V_2)\in \Pairs_\ell(Y)$, and $k=p+q$.
Let $\tilde{U}_1$ and $\tilde{V}_1$ be prefixes of $U_1$ and $V_1$ (respectively)
of length $\LCP_p(U_1,V_1)$, and let $\tilde{U}_2$ and $\tilde{V}_2$
be prefixes of $U_2$ and $V_2$ (respectively) of length $\LCP_q(U_2,V_2)$.
Observe that $\tilde{U}_1^R \tilde{U}_2$ is a factor of $X=U_1^RU_2$
$\tilde{V}_1^R \tilde{V}_2$ is a factor of $Y=V_1^R V_2$.
Moreover,
$$\HD(\tilde{U}_1^R \tilde{U}_2, \tilde{V}_1^R \tilde{V}_2)= \HD(\tilde{U}_1,\tilde{U}_2)+\HD(\tilde{V}_1,\tilde{V}_2)\le p+q= k.$$
Consequently,
$$\LCF_k(X,Y)\le |\tilde{U}_1^R \tilde{U}_2|=|\tilde{U}_1| + |\tilde{U}_2|=\LCP_p(U_1,V_1)+\LCP_q(U_2,V_2).$$
This concludes the proof of the claimed upper bound on $\LCF_k$.

Next, let $X[i\dd i']$ and $Y[j\dd j']$
be an optimal pair of factors; see \cref{fig:red}. They satisfy $$|X[i\dd i']|=|Y[j\dd j']| = \LCF_k(X,Y)\ge \ell\quad\text{and}\quad d_H(X[i\dd i'], Y[j\dd j'])\le k.$$
Denote $a=i+h(i,j)$ and $b=j+h(i,j)$,
where $h$ is the shift function associated with the $l$-cover $\S(l)$.
Note that $a \in [i\dd i']\cap \S(l)$ and $b\in [j\dd j']\cap \S(l)$.
Consequently, $(U_1,U_2):= ((X[\dd a-1])^R,X[a\dd])\in \Pairs_\ell(X)$ and $(V_1,V_2):= ((Y[\dd b-1])^R,Y[b\dd])\in \Pairs_\ell(Y)$.
Moreover, $$k\ge \HD(X[i\dd i'],Y[j\dd j'])=\HD(X[i\dd a-1],Y[j\dd b-1])+\HD(X[a\dd i'],Y[b\dd j']).$$
Therefore, for  $p = \HD(X[i\dd a-1],Y[j\dd b-1])$ and $q=k-p$, we have
$$\LCP_p(U_1,V_1)+\LCP_{q}(V_2,U_2)\ge |X[i\dd a-1]|+|X[a\dd i']| = |X[i\dd i']|= \LCF_k(X,Y).$$
This concludes the proof.
 \end{proof}

%
%

\section{The case of $k=0$ and of $k= 1$ and $\sigma=2$}
In this section, as a warmup, we show how the \problem can be used to solve two special cases of \kLCF$(X,Y,\ell)$. 
Then in \cref{sec:main} we explain how it can be used to solve the problem in full generality.

In order to solve \kLCF($X,Y,\ell$) for $k=0$,
we observe that, by Lemma~\ref{lem:red}, if
$\LCF_0(X,Y)\ge \ell$,
 then $\LCF_0(X,Y)= \maxPairLCP(\Pairs_\ell(X),\Pairs_\ell(Y))$.
 Thus, we simply build the joint suffix tree $\T$ of $X$, $Y$, $X^R$, and $Y^R$,
and we solve the appropriate instance of \problem.

The preprocessing time is clearly $\Oh(n)$,
while solving the \problem takes $\Oh(n+n\log n/\sqrt{\ell})$ time, which is $\Oh(n)$
provided that $\ell=\Omega(\log^2 n)$.

 For $k\ge 1$, we would ideally like to extend the family $\Pairs_\ell(S)$ to $\Pairs_\ell^{(k)}(S)$
replacing the suffixes and reversed prefixes of $S$ with their approximate copies
so that
$$\LCF_k(X,Y)=\maxPairLCP(\Pairs_{\ell}^{(k)}(X),\Pairs_{\ell}^{(k)}(Y)).$$

A very naive solution would be to extend the alphabet $\Sigma$ to $\Sh$ adding a symbol $ \$\notin \Sigma$,
and for each $(S_1,S_2)\in \Pairs_\ell(S)$ to replace an arbitrary subset of $k$ symbols with $ \$$'s.
However, this results in $\binom{n}{k}$ copies of each $(S_1,S_2)\in \Pairs_\ell(S)$, which is by far too much.

Our approach is therefore based on the technique of Cole et al.~\cite{DBLP:conf/stoc/ColeGL04},
which has already been used in the context of the \kLCFFull problem by Thankachan et al.~\cite{DBLP:journals/jcb/ThankachanAA16}.
It allows us to reduce the number of approximate copies of each $(S_1,S_2)\in \Pairs_\ell(S)$ to $\Oh(\log^k n)$.
However, the sets $\Pairs_\ell^{(k)}(X)$ and $\Pairs_\ell^{(k)}(Y)$ cannot be constructed independently,
and we actually have to build several pairs of such sets rather just one.

Below, we explain the main points for $k=1$ and $\sigma=2$. The description is illustrated in Example~\ref{ex:big} in the appendix.

Let $\F$ be a family consisting of the suffixes of $X$, $X^R$, $Y$, and $Y^R$, appearing $\Pairs_\ell(X)$ or $\Pairs_\ell(Y)$.
We apply the heavy-light decomposition on the compacted trie $\T(F)$; this technique can be summarized as follows:
\begin{fact}[Tarjan~\cite{DBLP:journals/jacm/Tarjan79}]\label{fct:hld}
If $T$ is a rooted tree, then in linear time we can mark some edges in $T$ as \emph{light} so that:
\begin{itemize}
  \item each node has at most one outgoing edge which is \emph{not} light,
  \item each root-to-leaf path contains $\Oh(\log |T|)$ light edges.
\end{itemize}
\end{fact}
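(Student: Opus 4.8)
The plan is to carry out the classical heavy-path construction. First I would compute, by a single post-order traversal of $T$ in $\Oh(|T|)$ time, the subtree size $\mathrm{size}(v)$ of every node $v$, i.e.\ the number of nodes in the subtree of $T$ rooted at $v$; recall $\mathrm{size}(v)=1+\sum_{w \text{ child of } v}\mathrm{size}(w)$. Then, for every non-leaf node $v$, I would designate one outgoing edge $(v,c)$ with $c$ a child maximizing $\mathrm{size}(c)$ (ties broken arbitrarily) as the \emph{heavy} edge of $v$, and mark all the other outgoing edges of $v$ as \emph{light}. Leaves have no outgoing edges, so nothing is marked there. By construction, every node has at most one outgoing edge that is not light — namely its heavy edge, if it has any children at all — which is the first required property.

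For the second property, the key observation is that whenever $(v,c)$ is a light edge, then $\mathrm{size}(c)\le \mathrm{size}(v)/2$. Indeed, let $h$ be the heavy child of $v$; since $c\neq h$ and $h$ was chosen to maximize the subtree size among the children of $v$, we have $\mathrm{size}(h)\ge \mathrm{size}(c)$, and hence $\mathrm{size}(v)\ge 1+\mathrm{size}(h)+\mathrm{size}(c)\ge 1+2\,\mathrm{size}(c)$, so $\mathrm{size}(c)\le(\mathrm{size}(v)-1)/2\le \mathrm{size}(v)/2$. Now fix any root-to-leaf path $v_0=\mathrm{root},v_1,\dots,v_t$, and suppose it uses light edges at steps $i_1<\dots<i_r$. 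Each such step at least halves the subtree size, i.e.\ $\mathrm{size}(v_{i_j})\le \mathrm{size}(v_{i_j-1})/2\le \mathrm{size}(v_{i_{j-1}})/2$, and since subtree sizes are nonincreasing along the path this yields $\mathrm{size}(v_t)\le |T|/2^{r}$. As $\mathrm{size}(v_t)\ge 1$, we get $2^r\le |T|$, i.e.\ $r\le \log_2|T|=\Oh(\log|T|)$, which is the second required property.

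It remains to note that the whole procedure runs in $\Oh(|T|)$ time: computing all subtree sizes is one post-order pass, and then iterating once over all nodes and their adjacency lists — selecting a child of maximum subtree size at each node and marking the other outgoing edges as light — takes time proportional to the total number of nodes and edges of $T$, which is $\Oh(|T|)$. I do not expect any real obstacle; the single point needing a moment's care is the inequality $\mathrm{size}(c)\le \mathrm{size}(v)/2$ for a light child $c$, where the ``$+1$'' accounting for $v$ itself provides exactly the slack that makes the weak inequality hold even with an arbitrary tie-breaking rule.
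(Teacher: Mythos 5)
Your proof is correct: the paper states this as a cited fact (Tarjan's heavy-path decomposition) without giving a proof, and your argument is precisely the standard construction behind that citation — heavy child by maximum subtree size, the halving inequality $\mathrm{size}(c)\le \mathrm{size}(v)/2$ for light children, and a single linear-time pass. Nothing is missing; the tie-breaking and the ``$+1$'' slack are handled correctly.
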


Next, for each string $F\in \F$, we construct a set $N(F)$ consisting of $F$ and any
string which can be obtained from $F$ by flipping the first symbol of a single light edge
on the path representing $F$ in $\T(\F)$. By \cref{fct:hld}, we have $|N(F)|=\Oh(\log |\F|)=\Oh(\log n)$.

Let us denote $N_0(F)=\{F\}$ and $N_1(F)=N(F)$.
These sets have been constructed so that they enjoy the following crucial property:
\begin{lemma}\label{lem:lcpd1}
If $F,G\in \F$, then
$$\LCP_1(F,G)=\max_{d_1+d_2=1} \{\LCP(F',G') : F'\in N_{d_1}(F),  G'\in N_{d_2}(G)\}.$$
\end{lemma}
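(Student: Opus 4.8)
The plan is to prove the two inequalities separately, with the $\ge$ direction being essentially immediate and the $\le$ direction carrying the content.

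For the easy direction ($\ge$), take any $F'\in N_{d_1}(F)$ and $G'\in N_{d_2}(G)$ with $d_1+d_2=1$. By construction, $F'$ differs from $F$ in at most $d_1$ positions (a single flipped symbol at the start of one light edge, or none), and similarly $G'$ differs from $G$ in at most $d_2$ positions. Hence on the common prefix of length $p:=\LCP(F',G')$, the strings $F$ and $G$ can disagree only where $F'$ and $G'$ agree but one of them was flipped, so $\HD(F[1\dd p],G[1\dd p])\le d_1+d_2=1$, giving $\LCP_1(F,G)\ge p$. Taking the maximum over all such $F',G'$ yields one inequality.

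For the hard direction ($\le$), let $p=\LCP_1(F,G)$. If $\HD(F[1\dd p],G[1\dd p])=0$ we may simply take $F'=F$, $G'=G$ (i.e. $d_1=d_2=0$), so assume there is exactly one mismatch position $j\le p$. The key is to locate this mismatch relative to the heavy-path structure of $\T(\F)$: consider the node $v$ of $\T(\F)$ that is the branching point created by the paths to $F$ and to $G$ (their longest common prefix in the trie has length $\ell_0:=\LCP(F,G)<j$), so at depth $\ell_0$ the two paths diverge along two different edges. I would argue that at most one of these two divergent edges can be the heavy (non-light) edge out of $v$; say the edge on $G$'s path is light (the symmetric case swaps the roles of $F$ and $G$, producing $d_1=0,d_2=1$ instead). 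Then flipping the first symbol of that light edge on $G$'s path changes $G[\ell_0+1]$ to match $F[\ell_0+1]$; call the result $G'\in N_1(G)$. Now $F$ and $G'$ agree on position $\ell_0+1$, and they continue to agree as long as $F$ and $G$ did — which, since the only mismatch of $F,G$ in $[1\dd p]$ was at position $j=\ell_0+1$, means $F[1\dd p]=G'[1\dd p]$, hence $\LCP(F,G')\ge p$. Taking $F'=F\in N_0(F)$ completes this direction.

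The main obstacle is the claim that the (unique) mismatch of $F$ and $G$ within $[1\dd p]$ occurs precisely at the trie-branching depth $\ell_0+1$, i.e. that $j=\ell_0+1$, and that the diverging edge of at least one of $F,G$ out of the branching node is light. The first part follows because if $F$ and $G$ agreed on a prefix longer than their mismatch position they would not branch there; more carefully, $\ell_0=\LCP(F,G)$ is the position of the first disagreement, and since there is at most one disagreement in $[1\dd p]$ and $p\ge\ell_0+1$ (as $\LCP_1\ge\LCP+1$ when a mismatch is "used"), that first disagreement is the only one and sits at $\ell_0+1$. The second part uses \cref{fct:hld}: the branching node has two distinct outgoing edges on the paths to $F$ and $G$, and a node has at most one non-light outgoing edge, so at least one of these two edges is light — and flipping its first symbol is exactly the operation defining $N(\cdot)$, so the modified string indeed lies in $N_1$. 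I would also note the boundary case where $F$ or $G$ is itself a prefix of the other on $[1\dd p]$ with the mismatch handled at a node where the shorter string terminates; here one checks that the relevant edge on the longer string's path out of that node is again constrained by the at-most-one-heavy-edge rule, so the same argument applies. Assembling these observations gives the $\le$ inequality and hence the lemma.
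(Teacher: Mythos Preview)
Your argument is correct and mirrors the paper's proof: the $\ge$ direction via the triangle inequality for Hamming distance, and the $\le$ direction by passing to the LCA of $F,G$ in $\T(\F)$ and invoking \cref{fct:hld} to find a light outgoing edge whose first symbol can be flipped (the binary alphabet guaranteeing the flip produces a match). One small slip to fix: in your base case you take $d_1=d_2=0$, which violates the constraint $d_1+d_2=1$; since by definition $F\in N(F)=N_1(F)$, simply use $F'=F\in N_1(F)$ and $G'=G\in N_0(G)$ instead---this is exactly how the paper handles the case $v=v_F$ or $v=v_G$, which coincides with your $\HD=0$ case.
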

\begin{proof}
First, let us bound $\LCP_1(F,G)$ from above.
Let $p=\LCP(F',G')$ be the maximum on the right-hand side,
We have
\begin{multline*}
\HD(F[\dd p], G[\dd p])=\HD(F[\dd p], F'[\dd p])+\HD(G'[\dd p], G[\dd p])\le\\\le \HD(F,F')+\HD(G',G)\le d_1+d_2=1.
\end{multline*}
Consequently, $\LCP_1(F,G)\ge p$ as claimed.

To bound $\LCP_1(F,G)$ from above, let us consider
terminal nodes $v_F$ and $v_G$ in $\T(\F)$ representing $F$ and $G$, respectively,
and their lowest common ancestor $v$.
If $v=v_F$ or $v=v_G$, then $\LCP_1(F,G)=\LCP(F,G)$ and the claimed bound holds due to $F\in N_0(F)$ and $G\in N_1(G)$ (and vice versa).
Otherwise, the edge from $v$ towards $v_F$ or the edge from $v$ towards $v_G$ has to be light (according to \cref{fct:hld}).
If the former edge is light, then $N_1(F)$ contains a string $F'$ obtained from $F$
by flipping the first character on that edge. Such a string $F'$ satisfies
$\LCP_1(F,G)=\LCP(F',G)$, so the claimed bound holds due to $G\in N_0(G)$.
Symmetrically, if the edge towards $v_G$ is light, then
$\LCP_1(F,G)=\LCP(F,G')$ for some $G'\in N_1(G)$. 
\end{proof}

For $S\in \{X,Y\}$ and $d\in \{0,1\}$, let us define
$$\Pairs^{(d)}_\ell(S) = \bigcup_{\substack{(U_1,U_2)\in \Pairs_\ell(X)\\d_1+d_2 = 1}} \{(U'_1,U'_2) : U'_1 \in N_{d_1}(U_1), U'_2\in N_{d_2}(U_2)\}.$$
Observe that $\Pairs^{(0)}_\ell(S)=\Pairs_\ell(S)$, whereas the set $\Pairs^{(1)}_\ell(S)$ satisfies $|\Pairs^{(1)}_\ell(S)| = \Oh(|\Pairs_\ell(S)|\log |\F|)=\Oh(n\log n / \sqrt{\ell})$.
\Cref{lem:red,lem:lcpd1} yield the following
\begin{corollary}\label{cor:lcpd1}
If $\LCF_1(X,Y)\ge \ell$, then
$$\LCF_1(X,Y) = \max_{k_1+k_2=1}\maxPairLCP(\Pairs^{(k_1)}_\ell(X),\Pairs^{(k_2)}_\ell(Y)).
$$
\end{corollary}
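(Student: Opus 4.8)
The plan is to combine the two key tools already established: the reduction from \Cref{lem:red}, which expresses $\LCF_1(X,Y)$ (under the assumption $\LCF_1(X,Y)\ge \ell$) as a maximum over $p+q=1$ of $\LCP_p(U_1,V_1)+\LCP_q(U_2,V_2)$ taken over $(U_1,U_2)\in\Pairs_\ell(X)$ and $(V_1,V_2)\in\Pairs_\ell(Y)$; and \Cref{lem:lcpd1}, which rewrites each individual $\LCP_1(F,G)$ as a maximum over $d_1+d_2=1$ of exact $\LCP(F',G')$ for $F'\in N_{d_1}(F)$, $G'\in N_{d_2}(G)$. Plugging the latter into the former, every term $\LCP_p(U_1,V_1)$ appearing in \Cref{lem:red} with $p\in\{0,1\}$ becomes a maximum of exact $\LCP$ values between a modified copy of $U_1$ drawn from $N_{p_1}(U_1)$ and a modified copy of $V_1$ drawn from $N_{p_2}(V_1)$, where $p_1+p_2=p$, and similarly for the suffix component.

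Concretely, I would first fix an optimal choice $p+q=1$ from \Cref{lem:red} and pairs $(U_1,U_2)\in\Pairs_\ell(X)$, $(V_1,V_2)\in\Pairs_\ell(Y)$ realizing $\LCF_1(X,Y)=\LCP_p(U_1,V_1)+\LCP_q(U_2,V_2)$. Applying \Cref{lem:lcpd1} to each of the two $\LCP$ terms (one of which has error bound $1$ and the other $0$, the latter being trivial since $\LCP_0$ is just $\LCP$), I get exact-$\LCP$ decompositions with total error $p+q=1$, hence some splitting $(k_1,k_2)$ with $k_1+k_2=1$ and modified strings $U_1'\in N_{p_1}(U_1)$, $U_2'\in N_{q_1}(U_2)$, $V_1'\in N_{p_2}(V_1)$, $V_2'\in N_{q_2}(V_2)$ such that the total number of flips on the $X$-side is $k_1=p_1+q_1$ and on the $Y$-side is $k_2=p_2+q_2$. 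By the definition of $\Pairs^{(d)}_\ell(S)$, the pair $(U_1',U_2')$ lies in $\Pairs^{(k_1)}_\ell(X)$ and $(V_1',V_2')$ lies in $\Pairs^{(k_2)}_\ell(Y)$, so $\LCF_1(X,Y)=\LCP(U_1',V_1')+\LCP(U_2',V_2')\le\maxPairLCP(\Pairs^{(k_1)}_\ell(X),\Pairs^{(k_2)}_\ell(Y))$, giving the ``$\le$'' direction of the claimed equality.

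For the reverse inequality I would argue that for any $(k_1,k_2)$ with $k_1+k_2=1$ and any $(P_1,P_2)\in\Pairs^{(k_1)}_\ell(X)$, $(Q_1,Q_2)\in\Pairs^{(k_2)}_\ell(Y)$, the quantity $\LCP(P_1,Q_1)+\LCP(P_2,Q_2)$ is a lower bound for $\LCF_1(X,Y)$. Here $P_1\in N_{p_1}(U_1)$ and $Q_1\in N_{p_2}(V_1)$ for some original pairs and some $p_1+q_1=k_1$, $p_2+q_2=k_2$; since flipping a character changes Hamming distance by at most one, $\LCP(P_1,Q_1)\le\LCP_{p_1+p_2}(U_1,V_1)$ and $\LCP(P_2,Q_2)\le\LCP_{q_1+q_2}(U_2,V_2)$, and the upper-bound half of \Cref{lem:red}'s proof (the stitching argument showing $\tilde U_1^R\tilde U_2$ and $\tilde V_1^R\tilde V_2$ are factors of $X$ and $Y$ at Hamming distance $\le (p_1+p_2)+(q_1+q_2)=k_1+k_2=1$) shows this sum is at most $\LCF_1(X,Y)$. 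Taking the maximum over all such configurations yields ``$\ge$''.

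The only real subtlety — and the step I would be most careful about — is bookkeeping the error budget across the reversal: a flip on the reversed-prefix string $U_1=(X[\dd a{-}1])^R$ corresponds to a flip inside the prefix portion of the factor of $X$, and I must make sure that the total flip count $k_1$ on the $X$-side (prefix flips plus suffix flips) together with $k_2$ on the $Y$-side does not exceed $k=1$, which is exactly why the corollary ranges over $k_1+k_2=1$ rather than allowing each side its own independent budget. Since $\Pairs^{(d)}_\ell$ is itself defined via the constraint $d_1+d_2=1$ on the two coordinates, this matches up cleanly, and the argument goes through without further complications.
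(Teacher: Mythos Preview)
Your proposal is correct and follows essentially the same two-step strategy as the paper: apply \cref{lem:red} to localize $\LCF_1$ at synchronized positions, then use \cref{lem:lcpd1} to replace each $\LCP_p$ by an exact $\LCP$ between modified copies, regrouping the error budgets as $k_1=p_1+q_1$ and $k_2=p_2+q_2$ so that $(U_1',U_2')\in\Pairs^{(k_1)}_\ell(X)$ and $(V_1',V_2')\in\Pairs^{(k_2)}_\ell(Y)$. Your reverse inequality via the triangle inequality $\LCP(P_1,Q_1)\le\LCP_{p_1+p_2}(U_1,V_1)$ is slightly more direct than the paper's appeal back to \cref{lem:lcpd1}, but the argument is otherwise the same.
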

\begin{proof}
By \cref{lem:red}, we have $\LCF_1(X,Y) = \LCP_p(U_1,V_1)+\LCP_q(U_2,V_2)$ for some $(U_1,U_2)\in \Pairs_{\ell}(X)$,
$(V_1,V_2)\in \Pairs_\ell(Y)$, and $p+q=1$. 
\cref{lem:lcpd1} yields that $\LCP_p(U_1,V_1)=\LCP(U'_1,V'_1)$ for some $U'_1\in N_{p_1}(U_1)$ and $V'_1\in N_{p_2}(V_1)$
such that $p=p_1+p_2$. Similarly, $\LCP_q(U_2,V_2)=\LCP(U'_2,V'_2)$ for some $U'_2\in N_{q_1}(U_2)$ and $V'_2\in N_{q_2}(V_2)$.
Observe that $(U'_1,U'_2)\in \Pairs^{(p_1+q_1)}_{\ell}(X)$ and $(V'_1,V'_2)\in \Pairs^{(p_2+q_2)}_{\ell}(Y)$,
so $$\LCF_1(X,Y)\le \maxPairLCP(\Pairs^{(k_1)}_\ell(X),\Pairs^{(k_2)}_\ell(Y))$$ for $k_i=p_i+q_1$
(which satisfy $k_1+k_2 = p+q=1$, as claimed).

Next, suppose that $(U'_1,U'_2)\in \Pairs^{(k_1)}_\ell(X)$ and $(V'_1,V'_2)\in \Pairs^{(k_2)}_\ell(Y)$.
We shall prove that $\LCF_1(X,Y)\ge \LCP(U'_1,V'_1)+\LCP(U'_2,V'_2)$.
Note that $U'_1\in N_{p_1}$ and $U'_2\in N_{q_1}(U_2)$ for some $p_1+q_1=k_1$ and $(U_1,U_2)\in \Pairs_\ell(X)$;
symmetrically, $V'_1\in N_{p_2}$ and $V'_2\in N_{q_2}(V_2)$ for some $p_2+q_2=k_2$ and $(V_1,V_2)\in \Pairs_\ell(Y)$.
By \cref{lem:lcpd1}, $\LCP(U'_1,V'_1)\le \LCP_{p_1+p_2}(U_1,V_1)$ and $\LCP(U'_2,V'_2)\le \LCP_{q_1+q_2}(U_2,V_2)$.
Hence, the claimed bound holds due to \cref{lem:red}:
$$\LCF_1(X,Y)\ge \LCP_{p_1+p_2}(U_1,V_1)+\LCP_{q_1+q_2}(U_2,V_2)\ge \LCP(U'_1,V'_1)+\LCP(U'_2,V'_2).$$
This concludes the proof.
\end{proof}

Consequently, it suffices to solve two instances of \problem,
with $(\P,\Q)$ equal to  $(\Pairs^{(0)}_\ell(X),\Pairs^{(1)}_\ell(Y))$ and $(\Pairs^{(1)}_\ell(X),\Pairs^{(0)}_\ell(Y))$, respectively.

\begin{prop}
The problem \kLCF$(X,Y,\ell)$ for $k=1$ and binary alphabet can be solved in $\Oh(n + n\log^2 n /\sqrt{\ell})$
time. If $\ell=\Omega(\log^4 n)$, this running time is $\Oh(n)$.
\end{prop}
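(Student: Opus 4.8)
The plan is to assemble the machinery already in place: \cref{cor:lcpd1} reduces $\LCF_1(X,Y)$ (when it is at least $\ell$) to two instances of the \problem, \cref{lem:problem} solves each efficiently, and the only extra ingredient is the compacted trie that stores all the modified strings, obtained via the construction outlined among our tools and detailed in the appendix.

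\textbf{Setup.} First I would build the generalized suffix tree of $X$, $X^R$, $Y$, and $Y^R$ in $\Oh(n)$ time (the binary alphabet is certainly linearly sortable), compute the $\ell$-cover $\S_n(\ell)$ together with $\Pairs_\ell(X)$ and $\Pairs_\ell(Y)$, each of size $\Oh(n/\sqrt{\ell})$, and mark the light edges of the underlying compacted trie in $\Oh(n)$ time via \cref{fct:hld}. For every string $F$ among the $\Oh(n/\sqrt{\ell})$ reversed prefixes and suffixes occurring in $\Pairs_\ell(X)\cup\Pairs_\ell(Y)$, I would form $N(F)$ by flipping, one at a time, the first symbol of each light edge on the path of $F$; over a binary alphabet ``flipping'' is unambiguous, and by \cref{fct:hld} we get $|N(F)|=\Oh(\log n)$, hence $\Oh(n\log n/\sqrt{\ell})$ modified strings in total. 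Inserting these into a single compacted trie $\T(\F)$ takes $\Oh(n+n\log n/\sqrt{\ell})$ time and yields the families $\Pairs^{(0)}_\ell(S)=\Pairs_\ell(S)$ and $\Pairs^{(1)}_\ell(S)$, of sizes $\Oh(n/\sqrt{\ell})$ and $\Oh(n\log n/\sqrt{\ell})$, presented as subsets of $\F^2$.

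\textbf{Answering the query.} Using \cref{lem:problem} I would solve the two instances of the \problem with $(\P,\Q)$ equal to $(\Pairs^{(0)}_\ell(X),\Pairs^{(1)}_\ell(Y))$ and to $(\Pairs^{(1)}_\ell(X),\Pairs^{(0)}_\ell(Y))$, and set $M$ to the larger of the two returned values, so that $M=\max_{k_1+k_2=1}\maxPairLCP(\Pairs^{(k_1)}_\ell(X),\Pairs^{(k_2)}_\ell(Y))$. Two facts make the output correct. First, $M\le\LCF_1(X,Y)$ holds unconditionally: any pair of pairs realizing a summand of $M$ gives, through the factor construction in the proof of \cref{lem:red} combined with \cref{lem:lcpd1}, a common factor of $X$ and $Y$ with at most one mismatch whose length equals that summand. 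Second, \cref{cor:lcpd1} states that whenever $\LCF_1(X,Y)\ge\ell$ we in fact have $M=\LCF_1(X,Y)$. Therefore the algorithm reports $M$ if $M\ge\ell$ (then $\LCF_1(X,Y)\ge M\ge\ell$, so \cref{cor:lcpd1} applies and $M$ is the correct answer) and ``NONE'' if $M<\ell$ (for if $\LCF_1(X,Y)\ge\ell$ then \cref{cor:lcpd1} would force $M=\LCF_1(X,Y)\ge\ell$, a contradiction).

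\textbf{Complexity and the hard part.} The suffix tree, the difference cover, the heavy--light decomposition, and the modified-string trie together cost $\Oh(n+n\log n/\sqrt{\ell})$. Each of the two \problem instances runs, by \cref{lem:problem}, in $\Oh(|\F|+N\log N)$ time with $|\F|=\Oh(n+n\log n/\sqrt{\ell})$ and $N=|\P|+|\Q|=\Oh(n\log n/\sqrt{\ell})$, that is, in $\Oh(n+n\log^2 n/\sqrt{\ell})$. Summing, the total running time is $\Oh(n+n\log^2 n/\sqrt{\ell})$, and for $\ell=\Omega(\log^4 n)$ the second term is $\Oh(n)$, which gives the $\Oh(n)$ bound. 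I expect the main obstacle to be precisely the trie construction deferred to the appendix: the $\Oh(n\log n/\sqrt{\ell})$ modified strings are not genuine suffixes, and the families for $X$ and for $Y$ must be built in a coordinated way (which is exactly why \cref{cor:lcpd1} needs two instances rather than one), so fitting all of them into one compacted trie within the claimed time requires care; the rest is a routine combination of the stated results.
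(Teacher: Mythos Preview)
Your proposal follows essentially the same route as the paper: joint suffix tree of $X,X^R,Y,Y^R$, difference-cover pairs, heavy--light decomposition on the compacted trie of the relevant suffixes to form the sets $N(F)$, a single compacted trie holding all modified strings, and the two instances of the \problem dictated by \cref{cor:lcpd1}, each solved via \cref{lem:problem}. Your handling of the output (compare $M$ with $\ell$ and use the unconditional inequality $M\le\LCF_1(X,Y)$) is more explicit than the paper's but entirely in line with it.

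The one place where you are slightly optimistic is the construction of the trie of modified strings: you state $\Oh(n+n\log n/\sqrt{\ell})$, whereas the paper builds $\T(\F')$ by comparison-sorting the $\Oh(n\log n/\sqrt{\ell})$ strings in $\F'$ (each comparison in $\Oh(1)$ time via LCA queries on the suffix tree), at a cost of $\Oh(|\F'|\log|\F'|)=\Oh(n\log^2 n/\sqrt{\ell})$. This does not affect the final bound, since the \problem instances already contribute $\Oh(n\log^2 n/\sqrt{\ell})$, and you correctly flag this step as the delicate one.
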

\begin{proof}
First, we build the sets $\Pairs_\ell(X)$ and $\Pairs_{\ell}(Y)$.
Next, we construct the joint suffix tree of strings $X$, $Y$, $X'$, $Y'$ (along with a component for constant-time LCA queries)
and we extract the compacted trie $\T(\F)$ of the family $\F$.
Then, we process light edges on $\T(\F)$ (determined by \cref{fct:hld}).
For each light edge $e$, we traverse the corresponding subtree and for each  terminal node (representing $F\in \F$),
we insert to $N(F)$ a string $F'$ obtained from $F$ by flipping the first character represented by $e$.
Technically, in $N(F)$ we just store the set of positions for which $F$ should be flipped to obtain $F'$.

To compute the compacted trie $\T(\F')$ of a family $\F'=\bigcup_{F\in \F}N(F)$,
we sort the strings in $F'\in \F'$ using a comparison-based algorithm.
Next, we extend the representation of $N(F)$ so that each $F'\in N(F)$ stores a pointer to the corresponding terminal node in $\T(\F')$.
This way, we can generate sets $\Pairs^{d}_\ell(S)$ for $d\in \{0,1\}$ and $S\in \{X,Y\}$
with strings represented as pointers to terminal nodes of $\T(\F')$.
Finally, we solve two instances of \problem according to \cref{cor:lcpd1}.

We conclude with the running-time analyis.
In the preprocessing, we spend $\Oh(n)$ time construct the joint suffix tree.
Then, applying \cref{fct:hld} to build the sets $N(F)$ for $F\in \F$ takes $\Oh(|\F|\log |\F|)=\Oh(n\log n / \sqrt{\ell})$
time. We spend further  $\Oh(|\F'|\log |\F'|)=\Oh(n\log^2n / \sqrt{\ell})$ time to construct $\T(\F')$.
Since $|\Pairs^{(d)}_\ell(S)|=\Oh(|\F|\log |\F|)$ for $d\in \{0,1\}$ and $S\in \{X,Y\}$,
the time to solve both instances of the \problem is also $\Oh(n\log^2n / \sqrt{\ell})$ (see Lemma~\ref{lem:problem}).
Hence, the overall time complexity is  $\Oh(n+n\log^2n / \sqrt{\ell})$.
\end{proof}

\section{Arbitary $k$ and $\sigma$}\label{sec:arbitrary}
In this section, we describe the core concepts of our solution for arbitrary number of mismatches $k$
and alphabet size $\sigma$. They depend heavily on the ideas behind the $\Oh(n \log^k n)$-time solution to \kLCF\  \cite{DBLP:journals/jcb/ThankachanAA16}, which originate in approximate indexing \cite{DBLP:conf/stoc/ColeGL04}.

\begin{definition}
Consider strings $U,V\in \Sigma^*$ and an integer $d\ge 0$.
We say that strings $U',V'\in \Sh^*$ form a $(U,V)_d$-pair if
\begin{itemize}
  \item $|U'|=|U|$ and $|V'|=|V|$;
  \item if $i>\LCP_d(U,V)$ or $U[i]=V[i]$, then $U'[i]=U[i]$ and $V'[i]=V[i]$;
  \item otherwise, $U'[i]=V'[i]\in \{U[i],V[i],\$\}$.
\end{itemize}
\end{definition}

\begin{definition}
Consider a finite family of strings $\F\sub \Sigma^*$.
We say that sets $N(F)\sub \Sh^*$ for $F\in \F$ form a \emph{$k$-complete family}
if for every $U,V\in \F$ and $0\le d \le k$, there exists a $(U,V)_d$-pair $U',V'$ with $U'\in N(U)$ and $V'\in N(V)$.
\end{definition}
\begin{remark}
A simple (yet inefficient) way to construct a $k$-complete family is to include
in $N(F)$ all strings which can be obtained from $F$ by replacing up to $k$ characters with $ \$$'s. An example of a more efficient family is shown in Table~\ref{tab:XY1} in the appendix.
\end{remark}

The following lemma states a property of $k$-complete families that we will use in the algorithm.
For $F\in \F$ and $0\le d \le k$, let us define $N_d(F)=\{F'\in N(F) : \HD(F,F')\le d\}$.
Moreover, for a half-integer $d'$, $0\le d'\le d$, let $$N_{d,d'}(F)=\{F'\in N_d(F) : \HD(F,F')-\tfrac12\#_{\$}(F')\le d'\}.$$
\begin{lemma}\label{lem:lcpd}
Let $N(F)$ for $F\in \F$ be a $k$-complete family.
If $F_1,F_2 \in \F$ and $0\le d \le k$, then
$$\LCP_d(F_1,F_2) = \max_{\substack{d_1+d_2= d \\F'_i \in N_{d,d_i}(F_i)}} \LCP(F'_1,F'_2)=\max_{\substack{\floor{d_1+d_2}\le d \\F'_i \in N_{k,d_i}(F_i)}} \LCP(F'_1,F'_2).$$
\end{lemma}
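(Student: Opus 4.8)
\textbf{Proof plan for \cref{lem:lcpd}.}
The plan is to prove the two claimed equalities by sandwiching $\LCP_d(F_1,F_2)$ between the two max-expressions and an easy upper bound, using the defining property of a $k$-complete family. Throughout, write $p = \LCP_d(F_1,F_2)$. I will first establish the chain of inequalities
\begin{equation*}
\max_{\substack{d_1+d_2=d\\ F'_i\in N_{d,d_i}(F_i)}} \LCP(F'_1,F'_2)\ \le\ \max_{\substack{\floor{d_1+d_2}\le d\\ F'_i\in N_{k,d_i}(F_i)}} \LCP(F'_1,F'_2)\ \le\ p\ \le\ \max_{\substack{d_1+d_2=d\\ F'_i\in N_{d,d_i}(F_i)}} \LCP(F'_1,F'_2),
\end{equation*}
which forces all three quantities (and hence both middle maxima) to coincide. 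The first inequality is immediate from inclusion of index sets: any pair $(d_1,d_2)$ with $d_1+d_2=d$ satisfies $\floor{d_1+d_2}\le d$, and $N_{d,d_i}(F_i)\subseteq N_{k,d_i}(F_i)$ since $d\le k$.

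For the middle inequality (every feasible pair in the second max gives LCP at most $p$), I would take $F'_1\in N_{k,d_1}(F_1)$, $F'_2\in N_{k,d_2}(F_2)$ with $\floor{d_1+d_2}\le d$, set $q=\LCP(F'_1,F'_2)$, and bound $\HD(F_1[\dd q],F_2[\dd q])$. The key observation is that at each position $i\le q$ where $F_1[i]\ne F_2[i]$, the equality $F'_1[i]=F'_2[i]$ forces at least one of $F'_1,F'_2$ to differ from its original there; and if the common symbol is $\$$, then \emph{both} differ. Hence, letting $s=\#_\$$ counted among positions $i\le q$, the number of original mismatches in $F_1[\dd q]$ vs.\ $F_2[\dd q]$ is at most $(\HD(F_1,F'_1)-s_1)+(\HD(F_2,F'_2)-s_2)+s$ where $s_j$ counts $\$$'s in $F'_j$ within $[\dd q]$ and $s\le s_1,s_2$; rearranging with $s\le\min(s_1,s_2)\le\tfrac12(s_1+s_2)$ and using $s_j\le\#_\$(F'_j)$ gives a bound of $(\HD(F_1,F'_1)-\tfrac12\#_\$(F'_1))+(\HD(F_2,F'_2)-\tfrac12\#_\$(F'_2))\le d_1+d_2$. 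Since the mismatch count is an integer, it is at most $\floor{d_1+d_2}\le d$, so $q\le \LCP_d(F_1,F_2)=p$. The bookkeeping with the half-integer $\tfrac12\#_\$$ is the one genuinely delicate point — one must be careful that a $\$$ at position $i$ is charged once to the mismatch count on the left-hand side but "paid for" by half a unit from each of $F'_1,F'_2$ on the right.

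For the last inequality (realizing $p$), I would invoke the $k$-completeness of $\{N(F)\}$ with the pair $(F_1,F_2)$ and parameter $d$: there is a $(F_1,F_2)_d$-pair $F'_1,F'_2$ with $F'_i\in N(F_i)$. By the definition of a $(U,V)_d$-pair, $F'_1$ and $F'_2$ agree on all positions $i\le \LCP_d(F_1,F_2)=p$ (where they either keep the original equal symbols or are both set to something in $\{F_1[i],F_2[i],\$\}$), so $\LCP(F'_1,F'_2)\ge p$. It remains to check that $(d_1,d_2):=(\HD(F_1,F'_1)-\tfrac12\#_\$(F'_1),\ \HD(F_2,F'_2)-\tfrac12\#_\$(F'_2))$ — or rather a suitable pair dominating these, adjusted to sum exactly to $d$ — are valid indices, i.e.\ that $F'_i\in N_{d,d_i}(F_i)$ with $d_1+d_2=d$. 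Here modifications of $F_i$ occur only at mismatch positions $i\le p$, and at each such position the contribution to $\HD(F_i,F'_i)-\tfrac12\#_\$(F'_i)$ is $0$ if $F'_i$ keeps the original, $1$ if it switches to the other string's symbol, and $\tfrac12$ if it becomes $\$$; summing the per-position contributions over both strings at a fixed mismatch position gives exactly $1$ when the combined symbol is not $\$$ (one of the two switches) and $1$ when it is $\$$ ($\tfrac12+\tfrac12$), so the grand total over all mismatch positions $\le p$ is exactly the number of such positions, which is $\HD(F_1[\dd p],F_2[\dd p])\le d$; padding up to $d$ by slightly increasing one $d_i$ (allowed since $N_{d,d_i}$ is monotone in $d_i$) yields $d_1+d_2=d$ and $F'_i\in N_{d,d_i}(F_i)$, completing the chain. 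The main obstacle, as noted, is managing the $\$$-discount consistently across the two directions; everything else is a direct unwinding of the definitions.
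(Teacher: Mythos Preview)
Your overall strategy is exactly the paper's: prove the sandwich
\[
\max_{\substack{d_1+d_2=d\\ F'_i\in N_{d,d_i}(F_i)}}\LCP(F'_1,F'_2)\ \ge\ \LCP_d(F_1,F_2)\ \ge\ \max_{\substack{\floor{d_1+d_2}\le d\\ F'_i\in N_{k,d_i}(F_i)}}\LCP(F'_1,F'_2)
\]
and close the loop by monotonicity of $N_{d,d'}$. Your treatment of the lower bound (realizing $p$ via the $(F_1,F_2)_d$-pair guaranteed by $k$-completeness, and the per-position accounting showing the two half-costs sum to exactly the number of mismatches) is correct and matches the paper.

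The upper-bound direction, however, has a genuine bookkeeping error. With your definitions ($s$ and $s_j$ counting $\$$'s of $F'_j$ \emph{within} $[\dd q]$), the common prefix forces $s_1=s_2=s$, so your intermediate bound reads $m\le \HD(F_1,F'_1)+\HD(F_2,F'_2)-s$. From there you write ``using $s_j\le\#_\$(F'_j)$'' to pass to $(\HD(F_1,F'_1)-\tfrac12\#_\$(F'_1))+(\HD(F_2,F'_2)-\tfrac12\#_\$(F'_2))$, but that inequality goes the wrong way: replacing $-\tfrac12 s_j$ by $-\tfrac12\#_\$(F'_j)$ \emph{decreases} the right-hand side, so it is not an upper bound on $m$. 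Concretely, any $\$$ occurring in $F'_j$ beyond position $q$ inflates $\HD(F_j,F'_j)$ without being captured by your $s$, and your chain never subtracts it back out.

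The fix (and this is what the paper does) is to strengthen the very first estimate: every $\$$ in $F'_j$ past position $q$ is automatically a mismatch with $F_j$, so
\[
\HD(F_j[\dd q],F'_j[\dd q])\ \le\ \HD(F_j,F'_j)-\bigl(\#_\$(F'_j)-s\bigr).
\]
Plugging this into $m\le a_1+a_2-s$ gives $m\le \HD(F_1,F'_1)-\#_\$(F'_1)+\HD(F_2,F'_2)-\#_\$(F'_2)+s$, and now $s\le\tfrac12(\#_\$(F'_1)+\#_\$(F'_2))$ yields $m\le d_1+d_2$ in the correct direction. So your plan is right in spirit; only this one estimate needs to be tightened before the $\$$-discount argument goes through.
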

\begin{proof}
We shall prove that
$$\max_{\substack{d_1+d_2= d \\F'_i \in N_{d,d_i}(F_i)}} \LCP(F'_1,F'_2) \ge \LCP_d(F_1,F_2) \ge \max_{\substack{\floor{d_1+d_2}\le d \\F'_i \in N_{k,d_i}(F_i)}}(F'_1,F'_2).$$
This is sufficient due to the fact that $N_{d,d'}(F)$ is monotone with respect to both $d$ and $d'$.

For the first inequality, observe that (by definition of a $k$-complete family)
 the sets $N(F_1)$ and $N(F_2)$ contain an $(F_1,F_2)_{d}$-pair $(F'_1, F'_2)$.
Let $P$ be the longest common prefix of $F'_1$ and $F'_2$ ($|P| = \LCP_k(F_1,F_2)$)
and recall that by definition $F'_i = P F_i[|P|+1\dd]$.
Moreover, let $d_i = \HD(F_i,F'_i)-\frac12\#_{\$}(F'_i)$ so that  $F'_i\in N_{d,d_i}(F_i)$.
Consequently,
$$
d\ge \HD(F_1[\dd |P|],F_2[\dd |P|])= \HD(F_1,F'_1)+\HD(F_2,F'_2)- \#_{\$}(P) = d_1+d_2.
$$
If $d>d_1+d_2$, we may increase $d_1$ or $d_2$.

For the second inequality, suppose that $F'_i \in N_{k,d_i}(F_i)$ for $\floor{d_1+d_2} \le d$.
Let $P$ be the longest common prefix of $F'_1$ and $F'_2$.
Then
\begin{multline*}
\HD(F_1[\dd |P|],F_2[\dd |P|])\,\le\, \HD(F_1[\dd |P|],P)+\HD(F_2[\dd |P|],P])-\#_{\$}(P)\le \\
\le\, \HD(F_1,F_1')-\#_{\$}(F_1)+\HD(F_2,F_2')-\#_{\$}(F_2)+\#_{\$}(P) \le d_1 + d_2.
\end{multline*}
Consequently, $\HD(F_1[\dd |P|],F_2[\dd |P|]) = \floor{\HD(F_1[\dd |P|],F_2[\dd |P|])}\le \floor{d_1+d_2}\le d$,
as claimed.
\end{proof}

In the algorithms, we represent a $k$-complete family using the compacted trie $\T(\F')$ of the union $\F'=\bigcup_{F\in \F}N(F)$.
Its terminal nodes $F'$ are marked with a subset of strings $F\in \F$ for which $F'\in N(F)$;
for convenience we also store $\#_\$(F')$ and $\HD(F,F')$.
Each edge is labeled by a factor of $F\in \F$, perhaps prepended by $ \$$.

Our construction of a $k$-complete family is based on the results of \cite{DBLP:conf/stoc/ColeGL04,DBLP:journals/jcb/ThankachanAA16},
but we provide a self-contained proof in the appendix.
\begin{restatable}[see also~\cite{DBLP:conf/stoc/ColeGL04,DBLP:journals/jcb/ThankachanAA16}]{prop}{thmtrans}\label{prop:trans}
Let $\F\sub\Sigma^*$ be a finite family of strings
and let $k\ge 0$ be an integer.
There exists a $k$-complete family $N$
such that $|N_d(F)|\le 2^d \binom{\log |\F| + d}{d}$ for each $F\in \F$ and $0\le d \le k$.
Moreover, the compacted trie $\T(\F')$ can be constructed in $\Oh(2^k|\F|\binom{\log |\F|+k+1}{k+1})$ time
provided constant-time $\LCP$ queries for suffixes of the strings $F\in \F$.
\end{restatable}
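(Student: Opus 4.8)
The plan is to mimic the recursive heavy-path decomposition of Cole et al., but organized so that all modified strings live in a single compacted trie. First I would build the compacted trie $\T(\F)$ of the original family and apply the heavy-light decomposition of \cref{fct:hld}, so that every root-to-terminal path crosses $\Oh(\log|\F|)$ light edges. The set $N(F)$ will consist of strings obtained from $F$ by performing, for up to $k$ of the light edges on $F$'s path, a ``correction'' at the branching point: replacing the first character of the light edge by some symbol in $\{$the original character, the competing character, $\$\}$. Intuitively, a light edge on the path of $F$ corresponds to a place where $F$ could diverge from another string $G$ whose path also passes through that branching node; recording the $\$$ (or the alternative letter) there lets $F'$ and $G'$ continue to agree past that mismatch. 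One then recurses inside the heavy subtree hanging off each branching node, but I would instead formalize this directly: for every subset of at most $k$ light edges on $F$'s path and every choice of correction symbol at each, include the resulting string in $N(F)$.

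The size bound comes from counting: each path has $\Oh(\log|\F|)$ light edges (say at most $L \le c\log|\F|$), at each chosen light edge we have a constant number of symbol choices — actually two nontrivial choices $\{$alternative letter, $\$\}$ plus the trivial one, and a careful accounting (only corrections that actually help, and only one of $\{$letter, $\$\}$ is ever needed in the critical configuration) gives the factor $2^d$ — and we choose $d$ of the $L$ light edges, yielding $|N_d(F)| \le 2^d\binom{L+d}{d} \le 2^d\binom{\log|\F|+d}{d}$. I would prove the $k$-completeness property by induction on the heavy-path structure: given $U,V\in\F$ and $0\le d\le k$, let $P=U[\dd\LCP_d(U,V)]$; walk down $\T(\F)$ along $P$, and at each branching node where $U$'s path and $V$'s path use different children, note that at least one of the two outgoing edges is light (by \cref{fct:hld}), so we can apply a correction on that side, setting the first character of the divergent edge to $\$$ (or to the matching letter when $U[i]=V[i]$ is impossible here by choice of $P$, actually $U[i]\ne V[i]$ at genuine divergences). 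There are at most $d$ such divergence points among the first $|P|$ characters since $\HD(U[\dd|P|],V[\dd|P|])\le d$, so the total number of corrections on each side is at most $d\le k$, and the corrected strings $U',V'$ agree on their first $|P|$ characters and form a $(U,V)_d$-pair by construction.

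For the construction time, I would build $\T(\F')$ incrementally using the suffix-tree-style machinery: constant-time $\LCP$ queries on suffixes of the $F\in\F$ (available, e.g., from a joint suffix tree with an LCA structure) let each modified string $F'$ be located in the growing trie in time proportional to the number of light edges plus corrections it carries, i.e.\ $\Oh(\log|\F|)$ per string after sorting; alternatively, sort all of $\F'$ by a comparison-based sort using $\Oh(1)$-time longest-common-prefix comparisons (each comparison between two modified strings reduces to an $\LCP$ query plus a scan over the $\Oh(k)$ correction positions). Since $|\F'|=\Oh\!\big(|\F|\cdot 2^k\binom{\log|\F|+k}{k}\big)$ and sorting costs a $\log$ factor which is absorbed into $\binom{\log|\F|+k+1}{k+1}$ versus $\binom{\log|\F|+k}{k}$, the total is $\Oh\!\big(2^k|\F|\binom{\log|\F|+k+1}{k+1}\big)$. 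The main obstacle I anticipate is the completeness argument: one must be careful that a single correction at a light edge handles the divergence for \emph{both} strings simultaneously and that the symbol written ($\$$ vs.\ one of the two letters) is consistent with the $(U,V)_d$-pair definition — in particular that when $U[i]\ne V[i]$ at a non-divergence position $i\le\LCP_d(U,V)$ inside a heavy edge, no correction is needed there because the heavy edge carries a fixed label and $\LCP_d$ already tolerates that mismatch; pinning down exactly which light edges need corrections and bounding their number by $d$ rather than by the total number $\Oh(\log|\F|)$ of light edges is the delicate point, and it is exactly what forces the bound to be stated in terms of $N_d$ rather than $N$ itself. The full bookkeeping is deferred to the appendix.
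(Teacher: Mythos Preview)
Your construction has a genuine gap for $k \ge 2$. You define $N(F)$ by correcting the first characters of up to $k$ light edges on $F$'s path \emph{in the fixed trie $\T(\F)$}. But in $\T(\F)$ the paths of any two strings $U,V\in\F$ diverge at exactly one node, their lowest common ancestor; so your completeness argument (``at each branching node where $U$'s path and $V$'s path use different children'') locates exactly one such node and therefore handles at most one mismatch. For the second and later mismatch positions $i_2<i_3<\cdots\le\LCP_d(U,V)$, the strings $U$ and $V$ already lie in disjoint subtrees of $\T(\F)$, and there is no reason $i_j$ should coincide with the first character of a light edge (or of any edge) on either $U$'s or $V$'s original root-to-leaf path. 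When it does not, every $U'\in N(U)$ satisfies $U'[i_j]=U[i_j]$ and every $V'\in N(V)$ satisfies $V'[i_j]=V[i_j]$, so no $(U,V)_d$-pair can be found in $N(U)\times N(V)$. Your remark that ``when $U[i]\ne V[i]$ \ldots\ inside a heavy edge, no correction is needed'' is exactly backwards: those are precisely the positions that require correction but are unavailable in your family.

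The paper's construction avoids this by making the heavy choice \emph{dynamic and recursive}. At each node $P$ of the trie being built, the ``heavy character'' $h$ is the most frequent first character among the suffixes \emph{currently present in $\F_P$} (which includes already-redirected strings), and every suffix whose first character differs from $h$ spawns copies following $h$ and following $\$$, each with its remaining budget decremented. Thus the light/heavy structure governing the second mismatch is determined \emph{after} the first redirection, not from $\T(\F)$; the size bound then comes from $|\F_{Pc}|\le\tfrac12|\F_P|$ for $c\ne h$, not from a fixed count of light edges on one path. You noted that Cole et al.\ ``recurse inside the heavy subtree'' but then explicitly replaced this by the flat subset-of-light-edges description; that replacement is precisely where the argument breaks. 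Your scheme does coincide with the paper's warm-up in Section~4 for $k=1$ (and $\sigma=2$), where a single mismatch is resolved at the LCA, but for $k\ge 2$ the recursion is not optional.
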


\begin{remark}
  The 1-complete family from Table~\ref{tab:XY1} is a subset of the family constructed by the algorithm that is behind \cref{prop:trans}.
\end{remark}

\section{Main Result}\label{sec:main}

Let $\F$ be a family of suffixes and reverse prefixes of $X$ and $Y$ occurring in $\Pairs_\ell(X)$ or $\Pairs_{\ell}(Y)$,
and let us fix a $k$-complete family $N(F) : F\in \F$.
For a half-integer $k'$, $0\le k' \le k$, and a string $S\in \{X,Y\}$ let us define
$$\Pairs^{(k,k')}_{\ell}(S) = \bigcup_{(U_1,U_2)\in \Pairs_\ell(S)} \{(U'_1,U'_2) : U'_i \in N_{d_i,d'_i}(U_i), k=d_1+d_2, k'=d'_1+d'_2\}.$$
To bound the size of $\Pairs^{(k,k')}_{\ell}(S)$, we observe that for $d_1+d_2=k$ and $k=\Oh(\log n)$
$$|N_{d_1}(U_1)|\cdot |N_{d_2}(U_2)|\le 2^{k}\tbinom{\log |\F|+d_1}{d_1}\tbinom{\log |\F|+d_2}{d_2}=\tfrac{2^{\Oh(k)}\log^k|\F|}{k^k}.$$
Hence, $|\Pairs^{(k,k')}_{\ell}(S)|=\frac{2^{\Oh(k)}|\F|\log^k|\F|}{k^k \sqrt{\ell}}$.
Combining \cref{lem:red,lem:lcpd}, we obtain the following.
\begin{corollary}\label{cor:lcpd}
If $\LCF_k(X,Y)\ge \ell$,
then
$$\LCF_k(X,Y)=\max_{k_1+k_2= k}\maxPairLCP(\Pairs^{(k,k_1)}_{\ell}(X), \Pairs^{(k,k_2)}_{\ell}(Y)).$$
\end{corollary}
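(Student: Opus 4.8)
The plan is to derive \cref{cor:lcpd} by combining \cref{lem:red} with the characterization of $\LCP_d$ from \cref{lem:lcpd}, exactly as \cref{cor:lcpd1} was derived from \cref{lem:red} and \cref{lem:lcpd1} in the binary case, but carrying along the extra book-keeping parameter $k'$ (the $\$$-discounted error budget). I would prove a two-sided inequality: first that the left-hand side is at most the maximum over splits $k_1+k_2=k$, and then that it is at least each such $\maxPairLCP$ term.

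For the upper bound, I would start from \cref{lem:red}: since $\LCF_k(X,Y)\ge\ell$, there exist $(U_1,U_2)\in\Pairs_\ell(X)$, $(V_1,V_2)\in\Pairs_\ell(Y)$ and $p+q=k$ with $\LCF_k(X,Y)=\LCP_p(U_1,V_1)+\LCP_q(U_2,V_2)$. I would apply the first equality in \cref{lem:lcpd} to each of the two $\LCP$ terms: $\LCP_p(U_1,V_1)=\LCP(U'_1,V'_1)$ for some $U'_1\in N_{p,p_1}(U_1)$, $V'_1\in N_{p,p_2}(V_1)$ with $p_1+p_2=p$, and likewise $\LCP_q(U_2,V_2)=\LCP(U'_2,V'_2)$ for $U'_2\in N_{q,q_1}(U_2)$, $V'_2\in N_{q,q_2}(V_2)$ with $q_1+q_2=q$. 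By the monotonicity of $N_{d,d'}$ in $d$ noted in the proof of \cref{lem:lcpd}, these memberships also hold with the first index raised to $k$, so $(U'_1,U'_2)\in\Pairs^{(k,p_1+q_1)}_\ell(X)$ and $(V'_1,V'_2)\in\Pairs^{(k,p_2+q_2)}_\ell(Y)$; setting $k_i=p_i+q_i$ gives $k_1+k_2=p+q=k$ and shows $\LCF_k(X,Y)\le\maxPairLCP(\Pairs^{(k,k_1)}_\ell(X),\Pairs^{(k,k_2)}_\ell(Y))$.

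For the lower bound, I would take arbitrary $(U'_1,U'_2)\in\Pairs^{(k,k_1)}_\ell(X)$ and $(V'_1,V'_2)\in\Pairs^{(k,k_2)}_\ell(Y)$ with $k_1+k_2=k$, so that $U'_i\in N_{d_i,d'_i}(U_i)$ with $d_1+d_2=k$, $d'_1+d'_2=k_1$, and $V'_i\in N_{e_i,e'_i}(V_i)$ with $e_1+e_2=k$, $e'_1+e'_2=k_2$. Applying the second equality (the ``$\lfloor d_1+d_2\rfloor\le d$'' form) of \cref{lem:lcpd} to each index $j\in\{1,2\}$, with the error budget $d'_j+e'_j$, yields $\LCP(U'_j,V'_j)\le\LCP_{\lfloor d'_j+e'_j\rfloor}(U_j,V_j)$, and then $\LCP(U'_1,V'_1)+\LCP(U'_2,V'_2)\le\LCP_{\lfloor d'_1+e'_1\rfloor}(U_1,V_1)+\LCP_{\lfloor d'_2+e'_2\rfloor}(U_2,V_2)$; since $(d'_1+e'_1)+(d'_2+e'_2)=k_1+k_2=k$ we can bound the right-hand side by the maximum in \cref{lem:red}, which equals $\LCF_k(X,Y)$. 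Taking the maximum over all such pairs and over $k_1+k_2=k$ gives the matching inequality.

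The main obstacle I anticipate is purely notational rather than conceptual: one must be careful that \cref{lem:lcpd} is invoked in the correct direction on each side — the equality ``$\LCP_d=\max_{d_1+d_2=d,\ F'_i\in N_{d,d_i}}\LCP$'' for the upper bound, and the inequality ``$\LCP(F'_1,F'_2)\le\LCP_{\lfloor d_1+d_2\rfloor}(F_1,F_2)$ whenever $F'_i\in N_{k,d_i}(F_i)$'' for the lower bound — and that the half-integer budgets $d'_i$ add up correctly, with the floor operations absorbed using $\lfloor d'_1+e'_1\rfloor+\lfloor d'_2+e'_2\rfloor\le d'_1+e'_1+d'_2+e'_2=k$. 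Apart from tracking these indices, the argument is a routine two-inequality combination of the two earlier lemmas and needs no new ideas.
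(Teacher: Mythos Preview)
Your proposal is correct and follows essentially the same two-inequality argument as the paper: use the first equality of \cref{lem:lcpd} together with \cref{lem:red} for the upper bound, and the second (floor) form of \cref{lem:lcpd} together with \cref{lem:red} for the lower bound, tracking the half-integer budgets so that they sum to $k$. One small remark: in the upper-bound step, the appeal to monotonicity of $N_{d,d'}$ in $d$ (``raising the first index to $k$'') is unnecessary, since you already have $U'_1\in N_{p,p_1}(U_1)$ and $U'_2\in N_{q,q_1}(U_2)$ with $p+q=k$, which directly witnesses $(U'_1,U'_2)\in\Pairs^{(k,p_1+q_1)}_\ell(X)$; the monotonicity in $d$ is only genuinely needed in the lower-bound direction, where $d_j$ and $e_j$ need not coincide.
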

\begin{proof}
By \cref{lem:red}, there exist $(U_1,U_2)\in \Pairs_\ell(X)$, $(V_1,V_2)\in \Pairs_\ell(Y)$, and $p+q=k$
such that $\LCF_k(X,Y)=\LCP_p(U_1,V_1)+\LCP_q(U_2,V_2)$.
\Cref{lem:lcpd} further yields existence of half-integers $p'_1+p'_2 \le p$
and $q'_1+q'_2 \le q$ such that $\LCF_k(X,Y)=\LCP(U'_1,V'_1)+\LCP(U'_2,V'_2)$
for some
$U'_1 \in N_{p,p'_1}(U_1)$, $V'_1\in N_{p,p'_2}(V_1)$, $U'_2 \in N_{q,q'_1}(U_2)$, and $V'_2 \in N_{q,q'_2}(V_2)$.

We set $k'_1 = p'_1 + q'_1$ and $k'_2=k-k'_1 \ge p'_2+q'_2$
so that $(U'_1,U'_2)\in \Pairs_{\ell}^{(k,k'_1)}(X)$
and $(V'_1,V'_2)\in \Pairs_{\ell}^{(k,p'_2+q'_2)}(Y)\sub \Pairs_{\ell}^{(k,k'_2)}(Y)$.
Consequently,
$$\LCF_k(X,Y)\le \maxPairLCP( \Pairs_{\ell}^{(k,k'_1)}(X), \Pairs_{\ell}^{(k,k'_2)}(Y)),$$
which concludes the proof of the upper bound on $\LCF_k(X,Y)$.

For the lower bound, we shall prove
that $\LCF_k(X,Y) \ge \LCP(U'_1,V'_1)+\LCP(U'_2,V'_2)$
for all $(U'_1,U'_2)\in \Pairs^{(k,k'_1)}_{\ell}(X)$ and $(V'_1,V'_2)\in \Pairs^{(k,k'_2)}_{\ell}(Y)$
such that $k'_1+k'_2\le k$.
By definition of $\Pairs^{(k,k'_1)}_\ell$, there exist $(U_1,U_2)\in \Pairs_\ell(X)$
such that $U'_1\in N_{k,p'_1}(U_1)$ and $U'_2\in N_{k,q'_1}(U_2)$ for half-integers $p'_1 + q'_1 \le k'_1$.
Similarly, there exist  $(V_1,V_2)\in \Pairs_\ell(Y)$
such that $V'_1\in N_{k,p'_2}(V_1)$ and $V'_2\in N_{k,q'_2}(V_2)$ for half-integers $p'_2 + q'_2 \le k'_2$.
We set $p = \floor{p'_1+p'_2}$ and $q=\floor{q'_1+q'_2}$,
and observe that $\LCP_{p}(U_1,V_1)\ge \LCP(U'_1,V'_1)$ as well as $\LCP_{q}(U_2,V_2)\ge \LCP(U'_2,V'_2)$ due to \cref{lem:lcpd}.
Now, \cref{lem:red} yields
$\LCF_k(X,Y)\ge \LCP_{p}(U_1,V_1)+\LCP_{q}(U_2,V_2) \ge \LCP(U'_1,V'_1)+\LCP(U'_2,V'_2)$, as desired.
\end{proof}

\begin{theorem}\label{thm:main}
For $k= \Oh(\log n)$,
the \kLCF$(X,Y,\ell)$ problem can be solved in time $\Oh(n + \tfrac{2^{\Oh(k)}n\log^{k+1}{n}}{k^k\sqrt{\ell}})$.
For $k=\Oh(1)$, this running time becomes $\Oh(n+\tfrac{n\log^{k+1}n}{\sqrt{\ell}})$.
\end{theorem}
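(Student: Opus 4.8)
The plan is to assemble the pieces already in hand: \cref{lem:red} reduces $\LCF_k(X,Y)$ (when it is at least $\ell$) to a maximization over $\LCP_p$/$\LCP_q$ at positions in the difference-cover sets; \cref{prop:trans} lets us replace approximate $\LCP$ computations by exact ones in a compacted trie of a $k$-complete family, blowing up the instance by a $\frac{2^{\Oh(k)}\log^k n}{k^k}$ factor; \cref{cor:lcpd} packages these two facts into a clean identity; and \cref{lem:problem} solves each resulting \problem\ instance in near-linear time in the instance size. So the proof is essentially a bookkeeping argument tracking sizes and running times through this pipeline.

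Concretely, I would proceed as follows. First build the joint suffix tree of $X$, $Y$, $X^R$, $Y^R$ in $\Oh(n)$ time, augmented with a constant-time LCA/$\LCP$ structure; from it extract $\Pairs_\ell(X)$ and $\Pairs_\ell(Y)$, each of size $\Oh(n/\sqrt\ell)$ by the difference-cover Fact, and the compacted trie $\T(\F)$ of the family $\F$ of the $\Oh(n/\sqrt\ell)$ reversed prefixes and suffixes appearing in those pairs. Second, invoke \cref{prop:trans} on $\F$ with parameter $k$ to obtain a $k$-complete family $N$ and its compacted trie $\T(\F')$; since $|\F|=\Oh(n/\sqrt\ell)$ and the $\LCP$ queries needed are available, this costs $\Oh\bigl(2^k|\F|\binom{\log|\F|+k+1}{k+1}\bigr)=\Oh\bigl(\tfrac{2^{\Oh(k)}n\log^{k+1}n}{k^k\sqrt\ell}\bigr)$ time, using $\binom{\log|\F|+k+1}{k+1}=\tfrac{2^{\Oh(k)}\log^{k+1}|\F|}{k^{k+1}}$ for $k=\Oh(\log n)$. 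Third, from $\T(\F')$ and the bookkeeping ($\#_\$(F')$ and $\HD(F,F')$ stored at terminal nodes) generate, for each half-integer $k'\in\{0,\tfrac12,1,\dots,k\}$ and each $S\in\{X,Y\}$, the set $\Pairs^{(k,k')}_\ell(S)$, represented via pointers into terminal nodes of $\T(\F')$; the size bound $|\Pairs^{(k,k')}_\ell(S)|=\tfrac{2^{\Oh(k)}|\F|\log^k|\F|}{k^k}=\tfrac{2^{\Oh(k)}n\log^k n}{k^k\sqrt\ell}$ is exactly the one computed just before \cref{cor:lcpd}. Fourth, for each split $k_1+k_2=k$ (there are $\Oh(k)$ of them, after absorbing the half-integer cases appropriately) solve the \problem\ instance on $\T(\F')$ with $\P=\Pairs^{(k,k_1)}_\ell(X)$ and $\Q=\Pairs^{(k,k_2)}_\ell(Y)$ via \cref{lem:problem} in $\Oh(|\F'|+N\log N)$ time with $N=\Oh(\tfrac{2^{\Oh(k)}n\log^k n}{k^k\sqrt\ell})$; take the maximum over all splits. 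Finally, if this maximum is at least $\ell$ report it — by \cref{cor:lcpd} it equals $\LCF_k(X,Y)$ — and otherwise report ``NONE''; correctness of the ``NONE'' branch is exactly the contrapositive of \cref{cor:lcpd}, since if $\LCF_k(X,Y)\ge\ell$ then the computed maximum is $\ge\ell$.

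For the running time I would sum the four stages. Stage one is $\Oh(n)$. Stages two, three, and four are each $\Oh\bigl(\tfrac{2^{\Oh(k)}n\log^{k+1}n}{k^k\sqrt\ell}\bigr)$: stage two by the \cref{prop:trans} bound, stage three because the sets have total size dominated by $|\F'|=\Oh(\tfrac{2^{\Oh(k)}n\log^{k+1}n}{k^k\sqrt\ell})$ and generating each pair is $\Oh(1)$ given the stored pointers, and stage four because $\Oh(k)$ calls to \cref{lem:problem} each cost $\Oh(|\F'|+N\log N)$ and $N\log N=\Oh(\tfrac{2^{\Oh(k)}n\log^{k+1}n}{k^k\sqrt\ell})$, absorbing the extra $\log N=\Oh(\log n)$ and the $\Oh(k)$ factor into the $2^{\Oh(k)}$ (note $k=\Oh(\log n)$ so $k=2^{\Oh(k)}$ trivially, and $\log n$ is already accounted for). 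Hence the total is $\Oh\bigl(n+\tfrac{2^{\Oh(k)}n\log^{k+1}n}{k^k\sqrt\ell}\bigr)$, and specializing to constant $k$ collapses the $2^{\Oh(k)}/k^k$ factor to $\Oh(1)$, giving $\Oh(n+\tfrac{n\log^{k+1}n}{\sqrt\ell})$.

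The main obstacle — really the only nonroutine point — is verifying that the $\binom{\log|\F|+k+1}{k+1}$ factor from \cref{prop:trans} is genuinely $\tfrac{2^{\Oh(k)}\log^{k+1}|\F|}{k^{k+1}}$ in the regime $k=\Oh(\log n)$, i.e.\ that the binomial coefficient behaves like $\log^{k+1}|\F|/(k+1)!$ up to a $2^{\Oh(k)}$ slack, together with a Stirling estimate $(k+1)!=2^{\Theta(k)}k^{k+1}$; everything else is substitution. I would also double-check one subtlety in the reduction: \cref{cor:lcpd} is stated with a max over integer splits $k_1+k_2=k$ but $\Pairs^{(k,k')}_\ell$ is indexed by \emph{half-integers} $k'$, so in stage four one should range $k_1$ over $\{0,\tfrac12,\dots,k\}$ (still $\Oh(k)$ values) or, equivalently, note that by monotonicity of $N_{d,d'}$ it suffices to take the single pair $(k,0)$-versus-$(k,k)$ along with the symmetric one; I would pick the formulation that matches \cref{cor:lcpd} verbatim to keep the proof short. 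No new ideas are needed beyond these sanity checks.
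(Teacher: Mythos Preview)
Your proposal is correct and follows essentially the same pipeline as the paper's proof: build the joint suffix tree and $\F$, apply \cref{prop:trans} to get the $k$-complete family and $\T(\F')$, form the sets $\Pairs^{(k,k')}_\ell(S)$ for the $2k+1$ half-integer values of $k'$, and solve the resulting \problem\ instances via \cref{lem:problem}, with the same running-time accounting. One small caveat: your speculative aside that ``by monotonicity of $N_{d,d'}$ it suffices to take the single pair $(k,0)$-versus-$(k,k)$ along with the symmetric one'' is not correct (the lower-bound direction of \cref{cor:lcpd} needs the full range of splits $k_1+k_2=k$), but since you explicitly commit to the formulation matching \cref{cor:lcpd} verbatim this does not affect your argument.
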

\begin{proof}
First, we build the joint suffix tree of $X$, $X^R$, $Y$, and $Y^R$,
as well as the family $\F$. A component for the LCA queries on the suffix tree
lets us compare any suffixes of $F\in \F$ in constant time.
This allows us to build the $k$-complete family $N(F) : F\in \F$,
represented as a compacted trie of $\F' :=\bigcup\{N(F) : F\in \F\}$ using \cref{prop:trans}.
Next, we construct the sets $\Pairs_{\ell}^{(k,k')}(X)\sub (\F')^2$ and $\Pairs_{\ell}^{(k,k')}(Y)\sub (\F')^2$
for $k'=0,\frac12, \ldots, k-\frac12, k$, and solve the $2k+1$ instances of \problem,
as specified in \cref{cor:lcpd}.

We conclude with running-time analysis.
Preprocessing takes $\Oh(n)$ time,
and the procedure of \cref{prop:trans} runs in  $\Oh(2^k|\F|\binom{\log |\F|+k+1}{k+1})=\frac{2^{\Oh(k)}n\log^{k+1}n}{k^k \sqrt{\ell}}$
time.
We have $\Pairs_{\ell}^{k,k'}(X) =\frac{2^{\Oh(k)}n\log^{k}n}{k^k \sqrt{\ell}}$,
so solving all instances of  also takes $\frac{2^{\Oh(k)}n\log^{k+1}n}{k^k \sqrt{\ell}}$ time (Lemma~\ref{lem:problem}).
The overall running time is therefore as claimed.
\end{proof}

In particular, for $k=\Oh(\log n)$, there exists $\ell_0 = \tfrac{2^{\Oh(k)}\log^{2k+2} n}{k^{2k}}$ such that
\kLCF$(X,Y,\ell)$ can be solved in $\Oh(n)$ time for $\ell \ge \ell_0$.
For $k=\Oh(1)$, we have $\ell_0 = \Oh(\log^{2k+2} n)$,
while for $k=o(\log n)$, we have $\ell_0 = n^{o(1)}$.
We arrive at the main result.

\begin{corollary}
  \textsc{Long }\kLCF\ can be solved in $\Oh(n)$ time.
\end{corollary}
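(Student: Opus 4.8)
The plan is to instantiate \cref{thm:main} at the threshold value of $\ell$. For constant $k=\Oh(1)$, the theorem gives running time $\Oh(n + n\log^{k+1}n/\sqrt{\ell})$. The \textsc{Long} \kLCF\ problem is by definition the \kLCF$(X,Y,\ell)$ problem restricted to $\ell = \Omega(\log^{2k+2} n)$. So first I would simply substitute $\ell = \Omega(\log^{2k+2} n)$ into the bound: then $\sqrt{\ell} = \Omega(\log^{k+1} n)$, so the additive term $n\log^{k+1}n/\sqrt{\ell}$ becomes $\Oh(n\log^{k+1} n / \log^{k+1} n) = \Oh(n)$, and the total is $\Oh(n) + \Oh(n) = \Oh(n)$.

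More precisely, I would invoke the observation stated just before the corollary: for $k=\Oh(1)$ there is a constant $\ell_0 = \Oh(\log^{2k+2} n)$ such that \kLCF$(X,Y,\ell)$ is solvable in $\Oh(n)$ time whenever $\ell \ge \ell_0$. Since \textsc{Long} \kLCF\ is exactly the case $\ell = \Omega(\log^{2k+2} n)$, and $k$ is treated as a constant here, any such $\ell$ eventually exceeds $\ell_0$, so the $\Oh(n)$-time algorithm of \cref{thm:main} applies directly. The only thing to check is that the hidden constants in ``$\Omega(\log^{2k+2} n)$'' in the problem definition and in $\ell_0$ are compatible, i.e.\ that the constant defining \textsc{Long} \kLCF\ can be taken at least as large as the constant multiplying $\log^{2k+2} n$ in $\ell_0$; this is a matter of convention, since the problem statement only fixes the asymptotic growth rate.

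I do not expect any real obstacle here — the corollary is a direct specialization, and all the work has already been done in \cref{thm:main} (which in turn rests on \cref{cor:lcpd}, \cref{prop:trans}, and \cref{lem:problem}). The one mild subtlety is purely definitional: making sure the quantifier ``$\ell = \Omega(\log^{2k+2} n)$'' is read as ``$\ell \ge c\log^{2k+2} n$ for a sufficiently large constant $c$,'' which is the standard reading and is consistent with how the paper introduced the problem. Hence the proof is a one-line appeal to \cref{thm:main} with $\sqrt{\ell} = \Omega(\log^{k+1} n)$.

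\begin{proof}
Recall that \textsc{Long} \kLCF\ is the \kLCF$(X,Y,\ell)$ problem in the regime $\ell = \Omega(\log^{2k+2} n)$ with $k = \Oh(1)$. In this regime $\sqrt{\ell} = \Omega(\log^{k+1} n)$, so by \cref{thm:main} the problem is solved in time
$$\Oh\!\left(n + \frac{n\log^{k+1} n}{\sqrt{\ell}}\right) = \Oh\!\left(n + \frac{n\log^{k+1} n}{\log^{k+1} n}\right) = \Oh(n).$$
Equivalently, taking $\ell_0 = \Oh(\log^{2k+2} n)$ as in the discussion following \cref{thm:main}, every instance of \textsc{Long} \kLCF\ has $\ell \ge \ell_0$ (for a suitable choice of the constant hidden in $\Omega(\cdot)$), so the $\Oh(n)$-time bound applies.
\end{proof}
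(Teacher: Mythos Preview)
Your proposal is correct and matches the paper's approach exactly: the corollary is stated without a separate proof, immediately after the discussion of $\ell_0$, as a direct specialization of \cref{thm:main} with $\ell=\Omega(\log^{2k+2} n)$. Your substitution $\sqrt{\ell}=\Omega(\log^{k+1} n)$ into the bound of \cref{thm:main} is precisely the intended one-line derivation.
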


\bibliographystyle{plainurl}
\bibliography{references}

\newpage
\appendix
\section{Examples}
This section contains additional examples related to our application of the technique of Cole et al.~\cite{DBLP:conf/stoc/ColeGL04}.

\begin{example}\label{ex:big}
Let us consider the \oLCF$(X,Y,\ell)$ problem for $X= \texttt{bbaaabb}$, $Y=\texttt{abababa}$, $\ell=5$.

\begin{center}
\begin{tikzpicture}[yscale=-1]
\begin{scope}
\begin{scope}[every node/.style={circle,minimum size=.1cm,draw=black,inner sep = 0pt}, final/.style={minimum size=.2cm, fill=black}]
\draw(0,0) node[circle](eps){};
\draw(-2,1) node(a){};
\draw(-3,2) node(aa){};
\draw(-3.5,5) node[final](aaabb){};
\draw(-2.5,4) node[final](aabb){};
\draw(-1,2) node(ab){};
\draw (-1.5,3) node[final](aba){};
\draw (-1.5,5) node[final](ababa){};
\draw (-0.5,3) node[final](abb){};
\draw(2,1) node(b){};
\draw(1,2) node[final](ba){};
\draw(1,4) node[final](baba){};
\draw(3,2) node[final](bb){};
\end{scope}

\draw[thick] (eps) --node[left]{$\texttt{a}$} (a);
\draw[thick,dotted] (a) --node[left]{$\texttt{a}$} (aa);
\draw[thick] (aa) --node[left]{$\texttt{abb}$} (aaabb);
\draw[thick,dotted] (aa) --node[right]{$\texttt{bb}$} (aabb);
\draw[thick] (a) --node[right]{$\texttt{b}$} (ab);
\draw[thick] (ab) --node[left]{$\texttt{a}$} (aba);
\draw[thick] (aba) --node[right]{$\texttt{ba}$} (ababa);
\draw[thick,dotted] (ab) --node[right]{$\texttt{b}$} (abb);
\draw[thick,dotted] (eps) --node[right]{$\texttt{b}$} (b);
\draw[thick] (b) --node[left]{$\texttt{a}$} (ba);
\draw[thick] (ba) --node[right]{$\texttt{ba}$} (baba);
\draw[thick,dotted] (b) --node[right]{$\texttt{b}$} (bb);
\end{scope}

\begin{scope}[xshift=5cm]
\draw (0,0) node[left]{$\texttt{aaabbb}\:\leadsto$};
\draw (0,0) node[right]{$\texttt{aaabbb}\ \texttt{a\textcolor{red}{b}abbb}$};

\draw (0,1/2) node[left]{$\texttt{aabb}\:\leadsto$};
\draw (0,1/2) node[right]{$\texttt{aabb}\ \texttt{a\textcolor{red}{b}bb}$};

\draw (0,2/2) node[left]{$\texttt{aba}\:\leadsto$};
\draw (0,2/2) node[right]{$\texttt{aba}$};

\draw (0,3/2) node[left]{$\texttt{ababa}\:\leadsto$};
\draw (0,3/2) node[right]{$\texttt{ababa}$};

\draw (0,4/2) node[left]{$\texttt{abb}\:\leadsto$};
\draw (0,4/2) node[right]{$\texttt{abb}\ \texttt{ab\textcolor{red}{a}}$};

\draw (0,5/2) node[left]{$\texttt{ba}\:\leadsto$};
\draw (0,5/2) node[right]{$\texttt{ba}\ \texttt{\textcolor{red}{a}a}$};

\draw (0,6/2) node[left]{$\texttt{baba}\:\leadsto$};
\draw (0,6/2) node[right]{$\texttt{baba}\ \texttt{\textcolor{red}{a}aba}$};

\draw (0,7/2) node[left]{$\texttt{bb}\:\leadsto$};
\draw (0,7/2) node[right]{$\texttt{bb}\ \texttt{\textcolor{red}{a}b}\ \texttt{b\textcolor{red}{a}}$};

\end{scope}

\end{tikzpicture}
\end{center}
Let $\S(5)=\{x \in \mathbb{Z} : x\bmod 5 \in \{0,3,4\}\}$.
We have $$\Pairs_5(X) = \{(\texttt{bb},\texttt{aaabb}),(\texttt{abb},\texttt{aabb}),(\texttt{aabb},\texttt{abb})\},$$
and $$\Pairs_5(Y) = \{(\texttt{ba},\texttt{ababa}),(\texttt{aba},\texttt{baba}),(\texttt{baba},\texttt{aba})\}.$$
The compacted trie $\T(\F)$ is illustrated above with light edges dotted.
As a result,
\begin{multline*}
\Pairs_5^{(1)}(X)=\Pairs_5(X)\;\cup\\
\{(\texttt{\textcolor{red}{a}b},\texttt{aaabb}),(\texttt{b\textcolor{red}{a}},\texttt{aaabb}),(\texttt{bb},\texttt{a\textcolor{red}{b}abb}),
(\texttt{ab\textcolor{red}{a}},\texttt{aabb}),(\texttt{abb},\texttt{aa\textcolor{red}{a}b}),(\texttt{aabb},\texttt{ab\textcolor{red}{a}}),(\texttt{aa\textcolor{red}{a}b},\texttt{abb})\}
\end{multline*}
and
$$
\Pairs_5^{(1)}(Y)=\Pairs_5(Y)\cup \{(\texttt{\textcolor{red}{a}a},\texttt{ababa}),(\texttt{aba},\texttt{\textcolor{red}{a}aba}),(\texttt{\textcolor{red}{a}aba},\texttt{aba})\}
$$
Consequently,
$$\maxPairLCP(\Pairs_5^{(1)}(X),\Pairs_5^{(0)}(Y))=\LCP(\texttt{bb},\texttt{ba})+\LCP(\texttt{a\textcolor{red}{b}abb},\texttt{ababa})=5,$$
and 
\begin{multline*}\maxPairLCP(\Pairs_5^{(0)}(X),\Pairs_5^{(1)}(Y))=\LCP(\texttt{abb},\texttt{aba})+\LCP(\texttt{aabb},\texttt{\textcolor{red}{a}aba})=\\=\LCP(\texttt{aabb},\texttt{\textcolor{red}{a}aba})+\LCP(\texttt{abb},\texttt{aba})=5.\end{multline*}
All these pairs correspond to factors $\texttt{baaab}$ of $X$ and $\texttt{babab}$ of $Y$.
\end{example}

\begin{table}[h!]
\centering
\begin{tabular}{|l|l|l|l|l|l|}
\hline
&\texttt{b}&\texttt{cb}&\texttt{acb}&\texttt{bacb}&\texttt{abacb}\\
\hline
\makecell[l]{\\\texttt{abacb}}&\makecell[l]{\texttt{\underline{\textcolor{red}{a}}}\\\texttt{\underline{a}bacb}}&\makecell[l]{\texttt{\underline{\textcolor{red}{a}b}}\\\texttt{\underline{ab}acb}}&\makecell[l]{\texttt{\underline{a\textcolor{red}{b}}b}\\\texttt{\underline{ab}acb}}&\makecell[l]{\texttt{\underline{\textcolor{red}{a}}acb}\\\texttt{\underline{a}bacb}}&\makecell[l]{\texttt{\underline{abacb}}\\\texttt{\underline{abacb}}}\\
\hline
\makecell[l]{\\\texttt{bacb}}&\makecell[l]{\texttt{\underline{b}}\\\texttt{\underline{b}acb}}&\makecell[l]{\texttt{\underline{\textcolor{red}{\$}}b}\\\texttt{\underline{\textcolor{red}{\$}}acb}}&\makecell[l]{\texttt{\underline{a}cb}\\\texttt{\underline{\textcolor{red}{a}}acb}}&\makecell[l]{\texttt{\underline{bacb}}\\\texttt{\underline{bacb}}}&\makecell[l]{\texttt{\underline{a}bacb}\\\texttt{\underline{\textcolor{red}{a}}acb}}\\\hline
\makecell[l]{\\\texttt{acb}}&\makecell[l]{\texttt{\underline{\textcolor{red}{a}}}\\\texttt{\underline{a}cb}}&\makecell[l]{\texttt{\underline{\textcolor{red}{a}}b}\\\texttt{\underline{a}cb}}&\makecell[l]{\texttt{\underline{acb}}\\\texttt{\underline{acb}}}&\makecell[l]{\texttt{\underline{\textcolor{red}{a}}acb}\\\texttt{\underline{a}cb}}&\makecell[l]{\texttt{\underline{ab}acb}\\\texttt{\underline{a\textcolor{red}{b}}b}}\\
\hline
\makecell[l]{\\\texttt{cb}}&\makecell[l]{\texttt{\underline{\textcolor{red}{\$}}}\\\texttt{\underline{\textcolor{red}{\$}}b}}&\makecell[l]{\texttt{\underline{cb}}\\\texttt{\underline{cb}}}&\makecell[l]{\texttt{\underline{a}cb}\\\texttt{\underline{\textcolor{red}{a}}b}}&\makecell[l]{\texttt{\underline{\textcolor{red}{\$}}acb}\\\texttt{\underline{\textcolor{red}{\$}}b}}&\makecell[l]{\texttt{\underline{ab}acb}\\\texttt{\underline{\textcolor{red}{a}b}}}\\
\hline
\makecell[l]{\\\texttt{b}}&\makecell[l]{\texttt{\underline{b}}\\\texttt{\underline{b}}}&\makecell[l]{\texttt{\underline{\textcolor{red}{\$}}b}\\\texttt{\underline{\textcolor{red}{\$}}}}&\makecell[l]{\texttt{\underline{a}cb}\\\texttt{\underline{\textcolor{red}{a}}}}&\makecell[l]{\texttt{\underline{b}acb}\\\texttt{\underline{b}}}&\makecell[l]{\texttt{\underline{a}bacb}\\\texttt{\underline{\textcolor{red}{a}}}}\\
\hline
\end{tabular}
\caption{A sample $1$-complete family for $\F=\{\texttt{abacb},\texttt{bacb},\texttt{acb},\texttt{cb},\texttt{b}\}$ (the suffixes of \texttt{abacb})
is $N(\texttt{b})=\{\texttt{a},\texttt{b},\texttt{\$}\}$, $N(\texttt{cb})=\{\texttt{ab},\texttt{cb},\texttt{\$b}\}$,
$N(\texttt{acb})=\{\texttt{abb},\texttt{acb}\}$, $N(\texttt{bacb})=\{\texttt{aacb},\texttt{bacb},\texttt{\$acb}\}$, and $N(\texttt{abacb})=\{\texttt{abacb}\}$.
The $(U,V)_1$-pairs for all $U,V\in \F$ are illustrated in the table above.
Observe that $\LCP_1(U,V)=\LCP(U',V')$ for the corresponding $(U,V)_1$-pair $(U',V')$.
Also, note that $\LCP_1(\texttt{acb},\texttt{cb})=1$ even though $\texttt{abb}\in N(\texttt{acb})$, $\texttt{ab}\in N(\texttt{cb})$,
and $\LCP(\texttt{abb},\texttt{ab})=2$.
 }\label{tab:XY1}
\end{table}

\newpage
\section{Proof of \cref{prop:trans}}

In this section we show an efficient construction of a $k$-complete family.


 We apply a recursive procedure that builds the subtree rooted at the node representing $P$.
 The input $\F_P$ consists of tuples $(S,F,b)$ such that
 $F\in \F$, $S$ is a suffix of $F$ of length $|S|=|F|-|P|$, and $b=k-\HD(F,PS)\ge 0$.
 Intuitively, the parameter $b$ can be seen as a ``budget'' of remaining symbol changes in the string that prevents exceeding the number $k$ of mismatches.
 In the first call we have $P=\eps$ and $\F_P = \{(F,F,k):F \in \F\}$.

 In the pseudocode below we state this procedure in an abstract way; afterwards we explain how to implement it efficiently.

\medskip
 \SetKwFunction{FR}{Generate}
\begin{algorithm}[H]
\SetKwProg{Fn}{Function}{ is}{end}
 \Fn{\FR{$P,\F_P$}}{
$h:=\text{a most frequent element of } \{S[1] : (S,F,b)\in \F_P \text{ and }S\ne \eps$\}\;
\ForEach(\tcp*[f]{$b=k-\HD(F,PS)\ge 0$}){$(S,F,b)\in \F_P$}{
\lIf{$S=\eps$}{$N(F):= N(F)\cup \{P\}$}
\Else{
$c := S[1]$\;
$\F_{Pc} := \F_{Pc}\cup \{\,(S[2\dd ],F,b)\,\}$\;
\If{$c \ne h$ \KwSty{and} $b>0$}{
$\F_{Ph} := \F_{Ph}\cup \{\,(S[2\dd],F,b-1)\,\}$\;$\F_{P\$} := \F_{P\$}\cup \{\,(S[2\dd],F,b-1)\,\}$\;
}
}
}
\ForEach{$c\in \Sigma\cup \{\$\}$ \KwSty{such that} $\F_{Pc}\ne \emptyset$}{
\FR{$Pc,\F_{Pc}$}\;
}
}
\label{alg:gen}
\caption{A recursive procedure inserting strings with prefix $P$ to sets $N(F)$.}
\end{algorithm}

\medskip
The proof of \cref{prop:trans} is divided into three claims that characterize the output of the above procedure.

\begin{claim}
For every $S,T\in \F$ and $0\le d \le k$, there exists an $(S,T)_d$-pair $(S',T')$ with $S'\in N(S)$ and $T'\in N(T)$.
\end{claim}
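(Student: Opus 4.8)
The plan is to show that the recursive procedure \FR{} maintains a precise invariant about which strings it generates, and then to use this invariant to extract, for any given $S,T\in\F$ and any $0\le d\le k$, a common prefix $P$ that will serve as the shared prefix of the desired $(S,T)_d$-pair. Concretely, I would first record the \emph{invariant of \FR}: whenever \FR{$P,\F_P$} is called, the multiset $\F_P$ consists exactly of the tuples $(S,F,b)$ with $F\in\F$, $S$ a suffix of $F$ with $|S|=|F|-|P|$, $PS\in\Sh^{*}$ reachable from the root by the recursion, and $b=k-\HD(F,PS)\ge 0$; moreover a string $P'$ is placed in $N(F)$ precisely when the tuple $(\eps,F,k-\HD(F,P'))$ with $k-\HD(F,P')\ge 0$ reaches a leaf call \FR{$P',\cdot$}. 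This is an immediate induction on the recursion following the pseudocode, so I would state it and move on.

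Next, fix $S,T\in\F$ and $0\le d\le k$, and let $p=\LCP_d(S,T)$. I want to exhibit a string $P$ of length $p$ such that $P\in N(S)$ along the branch where $PS[p+1\dd]\in\Sh^{*}$ (i.e.\ $S':=PS[p+1\dd]\in N(S)$) and simultaneously $T':=PT[p+1\dd]\in N(T)$, and such that $(S',T')$ is an $(S,T)_d$-pair. The natural candidate is to build $P$ greedily \emph{position by position}, tracking the pair of tuples $(S[j\dd],S,b_S)$ and $(T[j\dd],T,b_T)$ as they descend together through the recursion. At a position $j\le p$ where $S[j]=T[j]$, both tuples are forced down the edge labeled $S[j]$ (the ``$c:=S[1]$'' branch), consuming no budget. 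At a position $j\le p$ where $S[j]\ne T[j]$, at most one of $S[j],T[j]$ can equal the heavy letter $h$ chosen at that node, so at least one of the two tuples gets the option to branch to $\$$ (and to $h$); I would send \emph{both} tuples down the $\$$-edge — this is allowed because $S[j]\ne h$ or $T[j]\ne h$, and whenever a tuple's current letter differs from $h$ the procedure creates the $\$$-child for it provided its budget is positive. Since the number of mismatch positions among $S[1\dd p],T[1\dd p]$ is at most $d\le k$, the budgets $b_S,b_T$ (both starting at $k$) never drop below $k-d\ge 0$, so the descent is legal all the way to depth $p$. The resulting depth-$p$ node $P$ agrees with $S$ and $T$ on the matching positions and carries $\$$ on the $\le d$ mismatch positions; by definition this makes $(S',T')=(PS[p+1\dd],PT[p+1\dd])$ a $(S,T)_d$-pair, and both strings are inserted into $N(S)$, $N(T)$ respectively when the corresponding tuples eventually bottom out. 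I would also need the easy remark that $\LCP(S',T')\ge p=\LCP_d(S,T)$ is consistent with the length constraint $|S'|=|S|$, $|T'|=|T|$ required of a $(S,T)_d$-pair, and that beyond position $p$ the pair keeps the original letters (which the recursion does, since once a tuple's budget or the matching forces it, it follows its own letters).

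The step I expect to be the main obstacle is verifying that the ``send both down the $\$$-edge'' move is \emph{always} available at a mismatch position — i.e.\ that the procedure really does create $\F_{P\$}$ children for both tuples simultaneously whenever we need it. The subtlety is the interaction with the heavy letter $h$: the code only pushes a tuple to $\F_{Ph}$ and $\F_{P\$}$ when its current first letter $c\ne h$. If, say, $S[j]=h\ne T[j]$, then the $S$-tuple is \emph{not} offered a $\$$-branch at this node, only the $T$-tuple is; so I cannot send both to $\$$ here. The fix — and the crux of the argument — is to realize that I do not need a $\$$ at \emph{every} mismatch: it suffices to put $\$$ at those mismatch positions where \emph{neither} letter is heavy, and at positions where one letter is heavy to instead send the non-heavy tuple to $h$ (not $\$$), so that both tuples again carry the same letter $h=S[j]$ going forward. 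That still yields a legal common descent of length $p$, the common prefix $P$ still has $P[j]\in\{S[j],T[j],\$\}$ at every position, and both resulting strings differ from their originals only on mismatch positions — exactly the $(S,T)_d$-pair conditions. Handling this case split cleanly (heavy on $S$ only / heavy on $T$ only / heavy on neither / $S[j]=T[j]$) and checking the budget bookkeeping in each case is the technical heart of the claim; everything else is routine.
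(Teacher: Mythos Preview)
Your proposal is correct and follows essentially the same route as the paper's proof: both arguments trace the two tuples $(S,\cdot,\cdot)$ and $(T,\cdot,\cdot)$ simultaneously through the recursion, and at each mismatch send them together to $\F_{P\$}$ when neither first letter is heavy, or to $\F_{Ph}$ when one of them equals $h$. The paper packages this as a short induction on $|S|+|T|$ (with base cases $|S|=0$, $|T|=0$, or $d=0$) rather than an explicit position-by-position construction of $P$, but the case split and the budget bookkeeping are identical to what you describe.
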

\begin{proof}
We first observe that the algorithm satisfies the following property:
\begin{observation}
If $(S,F,b)\in \F_P$, then $PS$ is eventually added to $N(F)$.
\end{observation}

Next, we inductively prove that if $(S,F,b),(T,F',b')\in \F_P$ and $b,b'\ge d$,
then there exists an $(S,T)_d$-pair $(S',T')$ such that $PS'$ is added to $N(F)$ and $PT'$ is added to $N(F')$.

We proceed by induction on $|S|+|T|$.
If $|S|=0$, $|T|=0$, or $d=0$, then $S=S'$ and $T=T'$. Moreover,
$PS$ is added to $N(F)$ and $PT$ is added to $N(F')$ by the previous claim.

Thus, below we assume that these three quantities are all positive.
If $S[1]=T[1]=c$, then $(S[2\dd],F,b)$ and $(T[2\dd],F',b')$
are added to $\F_{Pc}$ and, by the inductive hypothesis,
we have an $(S[2\dd],T[2\dd])_d$-pair $(S'',T'')$ with $PcS''\in N(F)$
and $PcT''\in N(F')$.
We observe that $cS'',cT''$ is an $(S,T)_d$-pair.
If $S[1] \ne T[1]$, $(S[2\dd],F,b-1)$ and $(T[2\dd],F',b'-1)$
are both added to $\F_{P\$}$ (if $S[1]\ne h$ and $T[1]\ne h$) or to $\F_{Ph}$ (otherwise).
In either case, by the inductive hypothesis we have an $(S[2\dd],T[2\dd])_{d-1}$ pair $(S'',T'')$ with $PcS''\in N(F)$
and $PcT''\in N(F')$. It suffices to observe that $(cS'',cT'')$ is then an $(S,T)_d$-pair.

Finally, we derive the lemma because $S,T\in \F$ implies that $(S,S,k),(T,T,k)\in \F_{\eps}$.
\end{proof}

\begin{claim}
For each $F\in \F$, we have $|N_d(F)|\le 2^d\binom{\log |\F| + d}{d}$.
\end{claim}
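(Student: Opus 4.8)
The plan is to reduce the count $|N_d(F)|$ to a path-counting problem on a tree derived from the recursion of \texttt{Generate}, and then settle that by a short induction. First I would observe that for a fixed $F\in\F$ every bag $\F_P$ contains at most one tuple of the form $(\cdot,F,\cdot)$: indeed $S$ must be the suffix of $F$ of length $|F|-|P|$, and then $b=k-\HD(F,PS)$ is forced as well. Consequently the recursion nodes $P$ whose bag $\F_P$ contains a tuple involving $F$ form a subtree $\T_F$ of the recursion tree, rooted at $\eps$; each leaf of $\T_F$ (a node with $(\eps,F,b)\in\F_P$) is a distinct string, and these leaves are exactly the strings of $N(F)$. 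Along the root-to-leaf path to such a leaf $P$, each edge either keeps the current original letter (a \emph{non-branch} step) or replaces it by $h$ or by $\$$ (a \emph{change} step, taken only from a \emph{branch node}, i.e.\ a node with $S[1]\neq h$ and budget $b>0$); since every change step replaces one position by a symbol different from the original, the number of change steps on that path equals $\HD(F,P)$. Therefore $|N_d(F)|$ equals the number of root-to-leaf paths of $\T_F$ with at most $d$ change steps, and every branch node offers exactly two change options.

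The structural input I would use is that \emph{every root-to-leaf path of $\T_F$ contains at most $\log|\F|$ branch nodes}. This is where the heavy-light decomposition (\cref{fct:hld}) enters: a branch node is reached only along a ``light'' step of the (recursively applied) heavy-light decomposition of the trie of $\F$, because $h$ is chosen as a most frequent first letter among the currently live tuples, so moving away from the $h$-side shrinks the represented family by a constant factor, and any root-to-leaf path of a trie on $|\F|$ strings has at most $\log|\F|$ light steps. I expect this to be the main obstacle: one has to set the heavy-light accounting up so that it is not destroyed by the $\$$-substitutions or by the character changes that the procedure performs, and in particular check that the adaptive choice of $h$ is ``as good as'' a fixed heavy-light decomposition; this is exactly the part that leans on \cite{DBLP:conf/stoc/ColeGL04,DBLP:journals/jcb/ThankachanAA16}.

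Given that fact, I would finish with a one-line induction. For a node $P$ of $\T_F$ let $g_d(P)$ denote the number of leaves of $\T_F$ below $P$ reachable by using at most $d$ change steps, and let $L(P)$ be the maximum number of edges emanating from branch nodes along a downward path from $P$ to a leaf, so that $L(\eps)\le\log|\F|$ by the structural fact, $L$ is non-increasing down the tree, and $L$ drops to $\le L(P)-1$ at every child of a branch node. I claim $g_d(P)\le 2^d\binom{L(P)+d}{d}$, by induction on the height of $P$. If $d=0$ or $P$ is a leaf the unique reachable leaf is the ``never change'' one, so $g_d(P)=1=2^d\binom{L(P)+d}{d}$. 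If $P$ is not a branch node, or is a branch node but $d=0$, then $g_d(P)=g_d(Pc)$ with $L(Pc)\le L(P)$ and the bound is inherited. If $P$ is a branch node with stay-child $Pc$ and change-children $Ph,P\$$, and $d\ge1$, then writing $L'=L(P)-1\ge\max\{L(Pc),L(Ph),L(P\$)\}$ and applying the inductive hypothesis,
$$g_d(P)=g_d(Pc)+g_{d-1}(Ph)+g_{d-1}(P\$)\le 2^d\binom{L'+d}{d}+2\cdot2^{d-1}\binom{L'+d-1}{d-1}\le 2^d\binom{L'+d+1}{d}=2^d\binom{L(P)+d}{d},$$
where the last inequality uses $\binom{L'+d-1}{d-1}\le\binom{L'+d}{d-1}$ together with Pascal's rule $\binom{L'+d}{d}+\binom{L'+d}{d-1}=\binom{L'+d+1}{d}$. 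Evaluating at the root gives $|N_d(F)|=g_d(\eps)\le 2^d\binom{L(\eps)+d}{d}\le 2^d\binom{\log|\F|+d}{d}$, as claimed. The only routine checks left are that an exhausted budget at a light node makes it behave as a non-branch node (which only helps), and that the two change options at a branch node are genuinely distinct children.
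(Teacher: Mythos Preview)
Your inductive step and the Pascal manipulation are correct, but the structural input $L(\eps)\le\log|\F|$ that you flag as the ``main obstacle'' is in fact false, not merely hard to establish. The heavy-letter argument only shows that $|\F_P|$ halves when you \emph{leave} $h$, i.e.\ along the stay edge $P\to PS[1]$ from a light node; along the change edge $P\to Ph$ the bag need not shrink at all, since $\F_{Ph}$ receives a tuple for every string whose tuple in $\F_P$ has $S[1]=h$ or positive budget. Hence a path in $\T_F$ that keeps taking change edges can meet a fresh branch node at every step until the budget is exhausted, without ever paying the halving cost. Concretely, let $\F$ consist of $|\F|-1$ distinct strings with prefix $\texttt{a}^k$ together with one string $F$ with prefix $\texttt{b}^k$ (and $|\F|\ge 3$). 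Then $h=\texttt{a}$ at each node $\texttt{a}^i$ for $0\le i<k$, the tuple of $F$ is light with positive budget there, and the path $\eps\to\texttt{a}\to\cdots\to\texttt{a}^{k}$ in $\T_F$ passes through $k$ branch nodes while $|\F_{\texttt{a}^i}|=|\F|$ throughout. Thus $L(\eps)\ge k$, which exceeds $\log|\F|$ whenever $k>\log|\F|$; plugging the correct bound $L(\eps)\le\log|\F|+k$ into your induction yields only $|N_d(F)|\le 2^d\binom{\log|\F|+k+d}{d}$.

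The paper avoids this by keeping the two potentials separate rather than collapsing them into one $L$: it tracks $\log|\F_P|$ together with the remaining change budget $b=d-\HD(F[\dd|P|],P)$, proving $|N_{d,P}(F)|\le 2^{b}\binom{\log|\F_P|+b}{b}$ by induction on $|P|$. At a light node the three children are handled asymmetrically: on the stay child $\log|\F_{PS[1]}|\le\log|\F_P|-1$ (this is where the halving is used) while $b$ is unchanged; on each change child $b$ drops by one but only $\log|\F_{Ph}|,\log|\F_{P\$}|\le\log|\F_P|$ is available. Pascal's identity $\binom{m+b-1}{b}+\binom{m+b-1}{b-1}=\binom{m+b}{b}$ then closes the recursion exactly. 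Your single parameter $L(P)$ is forced to drop on \emph{all three} children of a branch node, and that is precisely what lets the change edges inflate the global bound by an additive~$k$.
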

\begin{proof}
For each $P\in \Sh^*$ us define $N_{d,P}(F) = \{F' \in N_d(F) : P\text{ is a prefix of }F'\}$.
We inductively prove the following bound for decreasing $|P|$:
$$|N_{d,P}(F)| \le \begin{cases}
2^{b}\binom{\log |\F_P| + b}{b} & \text{if }(S,F,b+k-d)\in \F_P\text{ and }b\ge 0,\\
0 & \text{otherwise.}
\end{cases}$$
If $(S,F,b+k-d)\notin \F_P$ for $b\ge 0$, then $N_d(F)$ does not contain any string $F'$ with prefix $P$.
Thus, we focus on the case when $(S,F,b+k-d)\in \F_P$ for $b\ge 0$.

If $|P|=|F|$, then
$$|N_{d,P}(F)| = |\{P\}|=1 = 2^0 \tbinom{\log|\F|}{0}\le 2^b \tbinom{\log|\F|+b}{b},$$
so the claimed inequality holds.

Otherwise, let $h$ be defined as in \FR($P,\F_P$).
If $S[1]=h$, then we just insert $(S[2\dd],F,b+k-d)$ to $\F_{Ph}$.
Consequently,
$$|N_{d,P}(F)|= |N_{d,Ph}(F)| \le 2^{b}\tbinom{\log |\F_{Ph}|+b}{b}
\le  2^{b}\tbinom{\log |\F_{P}|+b}{b},$$
as claimed.

Otherwise, $(S[2\dd],F,b+k-d)$ is inserted to $\F_{PS[1]}$
and  $(S[2\dd],F,b+k-d-1)$ is inserted to $\F_{Pa}$ for $a\in \{h,\$\}$
provided that $b > 0$.
Moreover, we observe that $|\F_{PS[1]}|\le \frac12 |\F_{P}|$ due to $S[1]\ne h$.
Consequently,
\begin{align*}
|N_{d,P}(F)|&= |N_{d,PS[1]}(F)|+|N_{d,Ph}(F)|+|N_{P\$}(F)|\le\\
&\le 2^{b}\tbinom{\log |\F_{PS[1]}| + b}{b}
+  2^{b-1}\tbinom{\log |\F_{Ph}| +b-1}{b-1}
+ 2^{b-1}\tbinom{\log |\F_{P\$}| + b-1}{b-1}\le \\&\le
2^{b}\tbinom{\log (\frac12|\F_{P}|) + b}{b}
+2^{b-1}\tbinom{\log |\F_{P}| + b-1}{b-1}
+2^{b-1}\tbinom{\log |\F_{P}| + b-1}{b-1}=\\&=
2^{b}\left(\tbinom{\log |\F_{P}| - 1 + b}{b}+\tbinom{\log |\F_{P}| +b-1}{b-1}\right)
= 2^{b}\tbinom{\log |\F_P| + b}{b},
\end{align*}
as claimed.

Finally, we deduce for $P=\eps$ that:
$$|N_d(F)|=|N_{d,\eps}(F)|\le 2^{d}\tbinom{\log |\F_{\eps}|+d}{d}
=  2^{k}\tbinom{\log |\F|+d}{d}$$
due to $(F,F,d+k-d)\in \F_{\eps}$.
This concludes the proof.
\end{proof}

In the implementation of the procedure we use finger search trees~\cite{DBLP:conf/stoc/GuibasMPR77},
which maintain subsets of a linearly-ordered universe supporting constant-time queries.
Among many applications (see~\cite{DBLP:reference/crc/Brodal04} for a survey), 
they support the following two operations~\cite{DBLP:journals/iandc/HoffmanMRT86,DBLP:reference/crc/Brodal04}:
\begin{itemize}
  \item insert an element into a set $A$, which takes $\Oh(\log |A|)$ time,
  \item for a given key $t$, split the set $A$ into $A_{\le t}=\{a\in A : a\le t\}$ and $A_{>t}=\{a\in A : a > t\}$,
  which takes $\Oh(\log \min(|A_{\le t}|, |A_{> t}|))$ time.
\end{itemize}

\newcommand{\Tokens}{\mathit{Tokens}}
\begin{claim}
The $k$-complete family $N$ represented as a trie $T_N$ can be constructed
in $\Oh(|\F|2^k\binom{\log |\F|+k+1}{k+1})$ time
provided constant-time $\LCP$ queries for suffixes of strings $F\in \F$.
\end{claim}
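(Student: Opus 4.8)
The plan is to analyze Algorithm~\ref{alg:gen} and show that, when the recursive calls \FR{} are implemented with finger search trees to store the multisets of first letters in each $\F_P$, the total work telescopes to the stated bound. The central accounting device will be the same recursion that underlies the previous claim bounding $|N_d(F)|$: at a node representing $P$ with ``budget'' $b$ for a tuple originating from $F\in\F$, the cost of distributing that tuple to children is $\Oh(1)$ amortized against a potential proportional to $2^{b}\binom{\log|\F_P|+b}{b}$, which is exactly the quantity already shown to dominate $|N_{d,P}(F)|$ for the relevant $d$. Summing over all $F$ and over the $k+1$ budget levels yields the factor $|\F|\,2^k\binom{\log|\F|+k+1}{k+1}$.

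\medskip

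\noindent\textbf{Step 1: implementing letter frequencies and splits.} In a call \FR{$P,\F_P$} we must (i) find a most frequent first letter $h$, and (ii) for every tuple, route it to $\F_{Pc}$ (and possibly to $\F_{Ph}$ and $\F_{P\$}$). Rather than bucketing by letter from scratch, I would maintain each $\F_P$ as a finger search tree keyed by the pair $(S[1],\text{pointer to }S)$, so that tuples with the same first letter occupy a contiguous block; a single left-to-right scan identifies all blocks and the largest one (giving $h$) in time linear in $|\F_P|$. For each block with letter $c\ne h$, we \emph{split} it off the tree; by the finger-search split bound this costs $\Oh(\log\min(|\F_{Pc}|,|\F_P\setminus\F_{Pc}|))=\Oh(\log|\F_{Pc}|)$, and since $|\F_{Pc}|\le\frac12|\F_P|$ for $c\ne h$, this is $\Oh(\log|\F_{Pc}|)$ charged to the child $Pc$. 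The block for $c=h$ stays in place (no split), becoming $\F_{Ph}$ after we additionally insert the $\$$- and $h$-redirected copies of the mismatching tuples; each such insertion costs $\Oh(\log|\F_{Ph}|)$, but there are at most $\Oh(1)$ of them per source tuple so this is absorbed. The strings $\F_{P\$}$ and the extra $\F_{Ph}$ entries are built by $\Oh(|\F_P|)$ insertions, each $\Oh(\log|\F_P|)$.

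\medskip

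\noindent\textbf{Step 2: amortized bound via potential.} Define a potential $\Phi(P)=\sum_{(S,F,b)\in\F_P} \bigl(1+2^{b}\binom{\log|\F_P|+b}{b}\bigr)$ and charge $\Oh(\log|\F_P|)$ per tuple in a call to $\Phi$; the crucial point, already proved in the previous claim, is that the multiset $\{|N_{d,P}(F)|\}$ and hence $\sum_P |\F_P|$ over all nodes $P$ at a fixed ``effective depth in budget'' is bounded by $\sum_F 2^{k}\binom{\log|\F|+b}{b}$ with $b$ ranging over $0,\dots,k$. Concretely, I would argue that the total number of (node, tuple) incidences is $\sum_{F\in\F}\sum_{b=0}^{k} |\{P : (S,F,b+k-d)\in\F_P\}| = \Oh\bigl(|\F|\sum_{b=0}^{k} 2^{b}\binom{\log|\F|+b}{b}\bigr)=\Oh\bigl(|\F|\,2^k\binom{\log|\F|+k}{k}\bigr)$ by the hockey-stick identity, and each incidence carries an $\Oh(\log|\F|)$ finger-search cost, giving one extra $\log$ factor — precisely $\Oh(|\F|\,2^k\binom{\log|\F|+k+1}{k+1})$. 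The $\$$-branch and $h$-branch duplications only double the per-step constant, which is why $2^{k}$ rather than a larger base appears.

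\medskip

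\noindent\textbf{Step 3: recovering the compacted trie and edge labels.} The procedure as written produces the (uncompacted) trie $T_N$ whose size could be $\Omega(\sum_F |F|\cdot|N(F)|)$, so I would not materialise long non-branching chains: instead, once all tuples in $\F_P$ share the same first letter $c$ \emph{and} carry the same budget and that situation persists, we fast-forward using a single $\LCP$ query on the underlying suffixes of $\F$ to jump to the next branching point, labelling the compacted edge by the corresponding factor of some $F$ (possibly prepended by one $\$$ from the last mismatch step). The constant-time $\LCP$ oracle on suffixes of $\F$ — available from the joint suffix tree with LCA support assumed in the hypothesis — is what makes each fast-forward $\Oh(1)$, so the running time is governed purely by branching events, of which there are $\Oh(|\F|\,2^k\binom{\log|\F|+k+1}{k+1})$ as counted above. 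Finally, each terminal node gets marked with the set of $F$ that contributed it, together with the stored values $\#_\$(F')$ and $\HD(F,F')=k-b$, all computable on the fly.

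\medskip

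\noindent\textbf{Main obstacle.} The delicate part is the amortized analysis of Step 2: one must verify that the finger-search split cost at a child $Pc$ with $c\ne h$ is genuinely $\Oh(\log|\F_{Pc}|)$ \emph{and} that $\sum$ of these over the tree is dominated by the incidence count times $\Oh(\log|\F|)$ without double-counting — i.e. that the ``$|\F_{Pc}|\le\frac12|\F_P|$'' shrinking exactly matches the $\log|\F_P|$ term in the $N_{d,P}$ recursion so that no additional $\log$ creeps in beyond the single one that upgrades $\binom{\cdot}{k}$ to $\binom{\cdot}{k+1}$. Getting the bookkeeping of the three simultaneous counters (string length $|S|$, budget $b$, and subtree size $|\F_P|$) to line up cleanly is where the proof needs care; everything else is a direct implementation of finger search trees plus the hockey-stick identity.
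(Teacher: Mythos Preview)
Your high-level plan (finger search trees for the groups $\F_P$, constant-time $\LCP$ fast-forwarding over non-branching stretches, amortized accounting tied to the bound on $|N_d(F)|$) matches the paper's. But Steps~1--2 contain a real gap that breaks the time bound.

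In Step~1 you write that ``a single left-to-right scan identifies all blocks \ldots\ in time linear in $|\F_P|$''. Charging $\Theta(|\F_P|)$ at \emph{every} branching node is fatal. Take $k=0$ and $\F=\{a^i b : 1\le i\le m\}$: the compacted trie is a caterpillar, and the call at the node representing $a^j$ has $|\F_P|=m-j+1$. Your scan therefore costs $\sum_j (m-j+1)=\Theta(m^2)=\Theta(|\F|^2)$, whereas the claimed bound for $k=0$ is $\Oh(|\F|\log|\F|)$. The paper avoids this by never touching the ``heavy'' tuples (those with $S[1]=h$): it repeatedly splits off the block with the smallest unprocessed first letter, so the total cost at a node is $\sum_{c\ne h}\log|\F_{P,c}|$, charged only to tuples that leave the heavy path.

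This same example shows the incidence bound in Step~2 is wrong. You claim $\sum_{F}\sum_{b}|\{P:(S,F,b)\in\F_P\}|=\Oh(|\F|\sum_b 2^b\binom{\log|\F|+b}{b})$, but in the caterpillar with $k=0$ the string $a^i b$ lies in $\F_P$ for $i$ different branching nodes $P$, so the left-hand side is $\Theta(|\F|^2)$. The quantity $|N_{d,P}(F)|$ bounded in the previous claim counts \emph{output strings with prefix $P$}, not the number of nodes $P$ a tuple visits; the ``hence'' linking the two is the unjustified step. The paper's fix is a direct token argument: it assigns $\Tokens_P(S,F,b)=C(2^{b+1}-1)\binom{\log|\F_P|+b+1}{b+1}$ to each tuple and verifies case-by-case that the token drop at a node covers the local work. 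Crucially, tuples with $S[1]=h$ lose no tokens and incur no cost, while a light tuple's token drop is at least $C\binom{\log|\F_P|+b}{b}\ge C\log|\F_P|$, which pays for its $\Oh(\log|\F_P|)$ insertions and splits. Summing the initial tokens at $P=\varepsilon$ gives exactly $\Oh(|\F|\,2^k\binom{\log|\F|+k+1}{k+1})$ with no separate ``incidence $\times\log$'' product and no hockey-stick step.

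Minor point: in Step~3, the fast-forward condition is just ``all nonempty $S$ share a common prefix'' (detected by one $\LCP$ query on the lexicographically smallest and largest $S$); requiring equal budgets is unnecessary.
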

\begin{proof}
To a tuple $(S,F,b) \in \F_P$ we assign a number of tokens:
$$\Tokens_P(S,F,b)=C (2^{b+1}-1)\tbinom{\log |\F_P|+b+1}{b+1}$$
where $C$ is a sufficiently large constant.
We shall inductively prove that
\FR($P,\F_P$) can be implemented in time
$$\sum_{(S,F,b)\in \F_P} \Tokens_P(S,F,b).$$

Before that, let us specify how the arguments to the procedure are specified.
The string $P$ is represented by the corresponding node of the constructed trie $T_N$;
we also explicitly store $|P|$ and $\#_{\$}(P)$.
The set $\F_P$ is stored in a finger search tree with tuples $(S,F,b)$
ordered by $S$. However, $S$ is not stored itself as it is uniquely specified as a suffix of $F$ of length $|F|-|P|$. Thus each element in the tree is stored in $\Oh(1)$ space. 

First, we process tuples $(S,F,b)$ with $S=\eps$. They are conveniently located at the beginning of $\F_P$.
We remove these tuples from $\F_P$ and store $F$ at the current node of $T_N$.
This simulates inserting $P$ to $N(F)$; we also store auxiliary values $\HD(P,F)=k-b$ and $\#_{\$}(P)$.

Next, we compute the length of longest common prefix $P'$ of non-empty strings $S$ with $(S,F,b)\in \F_P$.
For this, we make an $\LCP$ query for the smallest and the largest of these suffixes.
If the longest common prefix $P'$ is non-empty, we observe that $\F_{PP'}=\F_P$
(with the stored representation unchanged) and Algorithm \ref{alg:gen} does not explore any other branch.
Hence, we immediately call \FR($PP',\F_{PP'}$) which corresponds to creating a complete compacted edge of the resulting trie. This step takes $\Oh(1)$ time,
but it guarantees that \FR($PP',\F_{PP'}$) outputs or branches. Hence, this time gets amortized.

If $P'=\eps$, we partition $\F_P$ into at most $\sigma$
finger search trees $\F_{P,c}$ each storing tuples sharing the character $S[1]=c$,
and we identify the heavy character $h$ by choosing the largest $\F_{P,c}$.
For this, we iteratively split out the tree with the smallest unprocessed $S[1]$,
which takes time proportional to $\sum_{c \ne h} \log |\F_{P,c}|$. 

The sets $\F_{Pc}$ for $c\ne h$ already represented by $\F_{P,c}$ (note that the order does not change, and the tuples need not be altered since the ``budget'' $b$ remains the same and $S$ is stored implicitly).
Similarly, we can build $\F_{Ph}$ by inserting new tuples into $\F_{P,h}$.

Thus, we define $$\L_P := \{(S,F,b)\in \F_P : S\ne \eps \text{ and }S[1]\ne h\}$$
and insert to $\F_{Ph}$ and $\F_{P\$}$ tuples $(S[2\dd],F,b-1)$ for $(S,F,b)\in \L_P$ with $b>0$,
which takes $\Oh(\log |\F_P|)$ time per element.

In total, the processing time is $\Oh(1)$ for each element of $\L_P$ with $b=0$,
and $\Oh(\log |\F_P|)$ when $b>0$. Additionally, we may spend $\Oh(1)$ time for a tuple with $S=\eps$. Let us check that the difference in the number of tokens is sufficient to cover the running time of these operations.

The tuples with $S=\eps$ do not appear in future computations. Hence, we spend all their tokens on the computations related to them. It is indeed sufficient:
$$\Tokens_P(\eps,F,b) = C(2^{b+1}-1)\tbinom{\log |\F_P|+b+1}{b+1}\ge C\tbinom{\log |\F_P|+1}{1}=C(\log |\F_P|+1)\ge C.$$
We don't spend any time on tuples with $S[1]=h$, and number of tokens for such a tuple does not increase:
\begin{align*}
&\Tokens_P(S,F,b) - \Tokens_{Ph}(S,F,b)=\\
&C(2^{b+1}-1)\tbinom{\log |\F_P|+b+1}{b+1}- C(2^{b+1}-1)\tbinom{\log |\F_{Ph}|+b+1}{b+1}\ge 0.
\end{align*}
Finally, for a tuple with $S[1]\ne h$ (i.e., in $\L_p$) the difference in the number of tokens is
\begin{align*}
&\Tokens_P(S,F,b)-\Tokens_{Pc}(S',F,b)-\Tokens_{Ph}(S',F,b-1)\\
&-\Tokens_{P\$}(S',F,b-1)=\\
&C(2^{b+1}-1)\tbinom{\log |\F_P|+b+1}{b+1}-C(2^{b+1}-1)\tbinom{\log |\F_{Pc}|+b+1}{b+1}-C(2^{b}-1)\tbinom{\log |\F_{Ph}|+b}{b}\\
&-C(2^{b}-1)\tbinom{\log |\F_{P\$}|+b}{b}\\
&\ge C\tbinom{\log |\F_{P}|+b}{b}
\end{align*}
where $c=S[1]$ and $S'=S[2 \dd]$.
It is sufficient since we spend constant time for $b=0$ and $\Oh(\log|\F_P|)$ time for $b\ge 1$.

The claimed bound on the overall running time follows.
\end{proof}
\end{document}